\documentclass[journal]{IEEEtran}
\usepackage{mathrsfs}
\usepackage{url,times,amsmath,color,amssymb,graphicx,epsfig,psfig,cite,psfrag,subfigure,dsfont,booktabs}
%add new package as below
\usepackage{algorithm}
\usepackage{algorithmic}
\usepackage{bm}
\usepackage{color}
\usepackage{multicol}
\usepackage{stfloats}

\newtheorem{mypro}{Proposition}
%end
%\geometry{letterpaper,top=1in,bottom=1in,left=1in,right=1in}

\begin{document}
\title{Energy Efficiency Optimization for NOMA UAV Network with Imperfect CSI}

\author{Haijun Zhang,~\IEEEmembership{Senior Member,~IEEE}, Jianmin Zhang, Keping Long,~\IEEEmembership{Senior Member,~IEEE}

\thanks{This work is supported by the National Natural Science Foundation of China (61822104, 61771044), Beijing Natural Science Foundation (L172025, L172049), 111 Project (No. B170003), and the Fundamental Research Funds for the Central Universities(FRF-TP-19-002C1, RC1631), Beijing Top Discipline for Artificial Intelligent Science and Engineering, University of Science and Technology Beijing.  This paper was presented in part at the IEEE International Conference on Communications (ICC 2020), Dublin, Ireland, 2020. The corresponding authors are Keping Long and Haijun Zhang.

Haijun Zhang, Jianmin Zhang, and Keping Long are with Institute of Artificial Intelligence, Beijing Advanced Innovation Center for Materials Genome Engineering, Beijing Engineering and Technology Research Center for Convergence Networks and Ubiquitous Services, University of Science and Technology Beijing, Beijing 100083, China (e-mail: haijunzhang@ieee.org, zhangjianmin96@163.com, longkeping@ustb.edu.cn).

}} \maketitle
%##################################################################

\thispagestyle{empty} % no page number for the first page

\begin{abstract}
  Unmanned aerial vehicles (UAVs) are developing rapidly owing to flexible deployment and access
  services as air base stations. However, the channel errors of low-altitude communication links formed by
  mobile deployment of UAVs cannot be ignored. And the energy efficiency of the UAVs communication
  with imperfect channel state information (CSI) hasn¡¯t been well studied yet. Therefore, we focus on
  system performance optimization in non-orthogonal multiple access (NOMA) UAV network considering
  imperfect CSI between the UAV and users. A suboptimal resource allocation scheme including user
  scheduling and power allocation is designed for maximizing energy efficiency. Because of the nonconvexity
  of optimization function with an probability constraint for imperfect CSI, the original problem
  is converted into a non-probability problem and then decoupled into two convex subproblems. First, a
  user scheduling method is applied in the two-side matching of users and subchannels by the difference of
  convex programming. Then based on user scheduling, the energy efficiency in UAV cells is optimized
  through a suboptimal power allocation algorithm by successive convex approximation method. The
  simulation results prove that the proposed algorithm is effective compared with existing resource
  allocation schemes.
\end{abstract}
%\vspace{-1mm}
\begin{keywords}
 Unmanned aerial vehicle, energy efficiency, 5G, resource allocation, imperfect channel state information.
\end{keywords}

\section{Introduction}
Because of the sharp growth of communication devices, researches on unmanned aerial vehicle
(UAV) networks have attracted attention for their flexible deployment. UAV can meet the
requirements for global seamless coverage of the fifth generation (5G) or even beyond fifth
generation (B5G) mobile networks \cite{1UAV}. In order to meet the wireless traffic needs of all users
as much as possible, especially in urban areas with large-scale mobile users, 5G requires high-capacity
dense access point coverage \cite{2Network}. In areas where users and equipments are heavily deployed,
UAVs can act as air base stations (BS) to assist ground BS to provide services. The application
of UAVs not only increases the coverage area, but also enhances the system connectivity. In
addition, many Internet of Things (IoT) devices are very vulnerable to disasters. If the ground
BS is damaged and cannot provide communication services, UAVs can be used for emergency
communication \cite{3DSF}.

Different with the channel model in small cell networks, the existence
of direct line-of-sight (LOS) need to be considered in the low-altitude communication link.
Extra reflections caused by buildings in non-line-of-sight (NLOS) also need to be considered
\cite{4Propagation}. For the study of any communication environment, an accurate channel model is necessary \cite{104Trajectory}. Therefore, there are many studies on the channel model of the air-to-ground communication. \cite{5Optimal} studied the channel model from low altitude platforms (LAPs) to users. It also proved that the LOS probability from UAVs to users is in connection with the elevation angle and parameters
of the located area including the distribution of obstacles. In \cite{6Modeling}, the authors focused on the
impact of elevation angle because the elevation angle has the opposite effect on LOS probability
and antenna gain. And it is obvious that the channel gain from the UAV to users is related to
the UAV height.

UAVs can support enhanced mobile broadband (eMBB) \cite{106High} to assist with 5G cellular network
connections \cite{7Distributed}. The traffic load of the macrocell can be transferred to the small cell access
point to alleviate the burden on the cellular network \cite{8Optimal}, and the UAV can play the same role
as the air small cell. UAVs that can move and provide links at any time are an important part
of 5G interconnections at low altitudes. However, considering the limitations of UAV launch
power and flight height, the deployment of numerous UAVs in the B5G communication system
makes the system performance of UAV communication networks worth studying. \cite{9Throughput} considered
UAVs as mobile access points, and maximized user throughput by optimizing trajectory and
resource allocation. Multiple UAV collaborations are widely used in the IoT and wireless sensor networks\cite{112An}\cite{113Flight}. UAVs with underlaid Device-to-Device (D2D) communications is considered
in \cite{10Unmanned}, the author discussed the coverage probability and overall rates in the entire UAV system.
Further considering D2D communication with energy harvesting, resource optimization in the
UAV auxiliary network to maximize throughput per unit time was discussed in \cite{11Resource}.

Non-orthogonal multiple access (NOMA) which is a very critical technology of 5G \cite{13Energy},
has been applied to UAV communication to provide services for more ground users. NOMA
communication between the aerial UAV and ground is a promising technology called groundaerial
NOMA \cite{14Cellular}. NOMA technology ensures that multiple users transmit simultaneously in
a subchannel by successive interference cancellation (SIC) technology. Specifically, users with strong channel gain can eliminate the signal from users with poor channel gain \cite{114Closed}. Using SIC on user side
can successfully decode the received signal according to the channel gain order and improve
system performance \cite{15Secure}.

Many works of resource allocation have been done in the UAV network. In \cite{16Placement}, since
the total system rate in UAV network was non-convex, the author first optimized the position of
the UAV to minimize the path loss and then optimized the power allocation. \cite{17Non} discussed the
power allocation in the case of UAV height fixation and UAV altitude movement, and proved
that NOMA-assisted UAV network had a better performance than orthogonal frequency division
multiple access (OFDMA)-assisted UAV network. \cite{18Cellular}\cite{19Multi} focused on the interference coordination of the UAV-based wireless networks.

\cite{20A} comprehensively optimized the resource allocation of macro BS and UAVs, and then used the difference of convex (DC) program to solve the non-convex objective functions. In previous researches, if
the stability of the network was considered, probability of outage would be introduced. If the
instantaneous data rate exceeds the maximum capacity of the system, it is considered as a
communication outage. The author in \cite{21Multiple} confirmed that the outage probability was related to
the users quality of services (QoS) demand and power allocation.
%The author in \cite{22On} studied the outage probability of users in NOMA UAV with D2D communications.

However, previous studies of the UAV communication network only discussed the perfect
channel state information (CSI) with path loss. The author in \cite{23Energy} fixed the height of the aircraft
and maximized system energy efficiency through optimization of the flight trajectory of the
aircraft. \cite{24An} considered the UAV-assisted 5G network with macro BS and connected users with
different UAVs. It proved that the introduction of UAV can enhance the system performance, and
the energy efficiency was closely related to the flight height of the UAV. However, different from
\cite{5Optimal}, the small-scale fading of UAV channels should also be considered. Because in this paper,
the low-altitude UAV communication is considered in urban areas where users and buildings
are densely populated. In this scenario, there must be small-scale fading caused by multipath
propagation, and it cannot be ignored in comparison to path loss.
Because UAVs communicate with ground users through air links, the CSI is not perfectly
perceived in practice, due to estimation errors and finite data feedback \cite{124Distributed}. \cite{25Joint} discussed the resource allocation problem of the cellular
network in NOMA with imperfect CSI. Due to the high-speed mobility of UAVs, UAVs can be
randomly deployed over ultra-dense users with demand. The low-altitude communication link
is affected by the reflection of various obstacles. Therefore, compared with the ground BS to
the user, it is more necessary for UAV communication to consider about the imperfection of the
channel. Then, energy efficiency of users in the NOMA UAV network with imperfect CSI is
discussed in this paper.

In this work, system performance in a downlink NOMA network architecture with UAVs and
a ground macro BS is studied. To the best of our knowledge, the energy efficiency optimization
with imperfect CSI in NOMA UAV network hasn¡¯t been studied yet. Different with \cite{26Two} that
studied two-side matching between two types of users sharing spectrum to improve spectrum
efficiency, a two-side matching between users and subchannels is studied to improve energy
efficiency. Subchannel and power resources are successively allocated to optimize the UAV
network energy efficiency. Considering imperfect CSI, an estimation error variance is introduced
to form energy efficiency expression \cite{27Effect}. UAV moves to users who cannot be served by a
ground BS, and a resource allocation method is designed based on the number of users and
communication distance when UAV is hovering. As for resource optimization method, first the
power of each subchannel is set to equal when users and subchannels are matched. Then the
appropriate power is allocated on different subchannels. The main contributions of our paper are
listed as follows.

\begin{itemize}
  \item \emph{Energy efficiency optimization for NOMA UAV network:} The total network consists of UAVs
  acting as air BSs, a ground BS, and users receiving signals from UAVs and the ground BS
  respectively. Unlike the optimization of total data rate, our target is to optimize energy
  efficiency of UAV users with NOMA. The energy efficiency is optimized through user
  scheduling as well as power allocation in the UAV network, considering the constraints
  such as imperfect CSI, transmit power limit of the UAV, and the interference of the UAV
  to the macro user. The non-convex target problem is converted to two subproblems with
  convexity, then solved by user scheduling and power allocation respectively.
  \item \emph{Two-side matching between subchannels and users by user scheduling algorithm:} The twoside
  suboptimal selection algorithm of the subchannel and the user is studied on the basis of
  the channel gain with imperfect CSI from UAVs to users. The mobility of the UAV results
  in different distances of the communication link, which will cause changes in channel and
  user matching. Therefore, two-side matching is performed with hovering UAV. In the paper, assume that at most two users can be assigned to each subchannel. Then a DC
  programming is applied to select the energy efficient user pair on each subchannel by power
  proportion factor.
  \item \emph{Design of suboptimal power allocation algorithm:} On the basis of user scheduling, matching
  of users and subchannels has been completed, and the power proportion factor on each
  subchannel has been fixed, only the power on each subchannel has not been allocated,
  but the energy efficiency in UAV network is non-convex with respect to power. Then the
  successive convex approximation method is applied to transform the optimization problem
  to convex problem. Finally, a suboptimal power allocation algorithm is designed to optimize
  energy efficiency.
\end{itemize}

The rest of our paper is organized as follows. Section II depicts the system model of UAV
network. In Section III, we analyze the representation of the energy efficiency function and introduce
user scheduling and power allocation methods. Then the algorithm of resource allocation is shown in Section IV. Simulation results of energy efficiency are presented in Section V. At last, Section VI is the conclusion of this paper.

\section{System Model}
\begin{figure}
        \centering
        \includegraphics*[width=8cm]{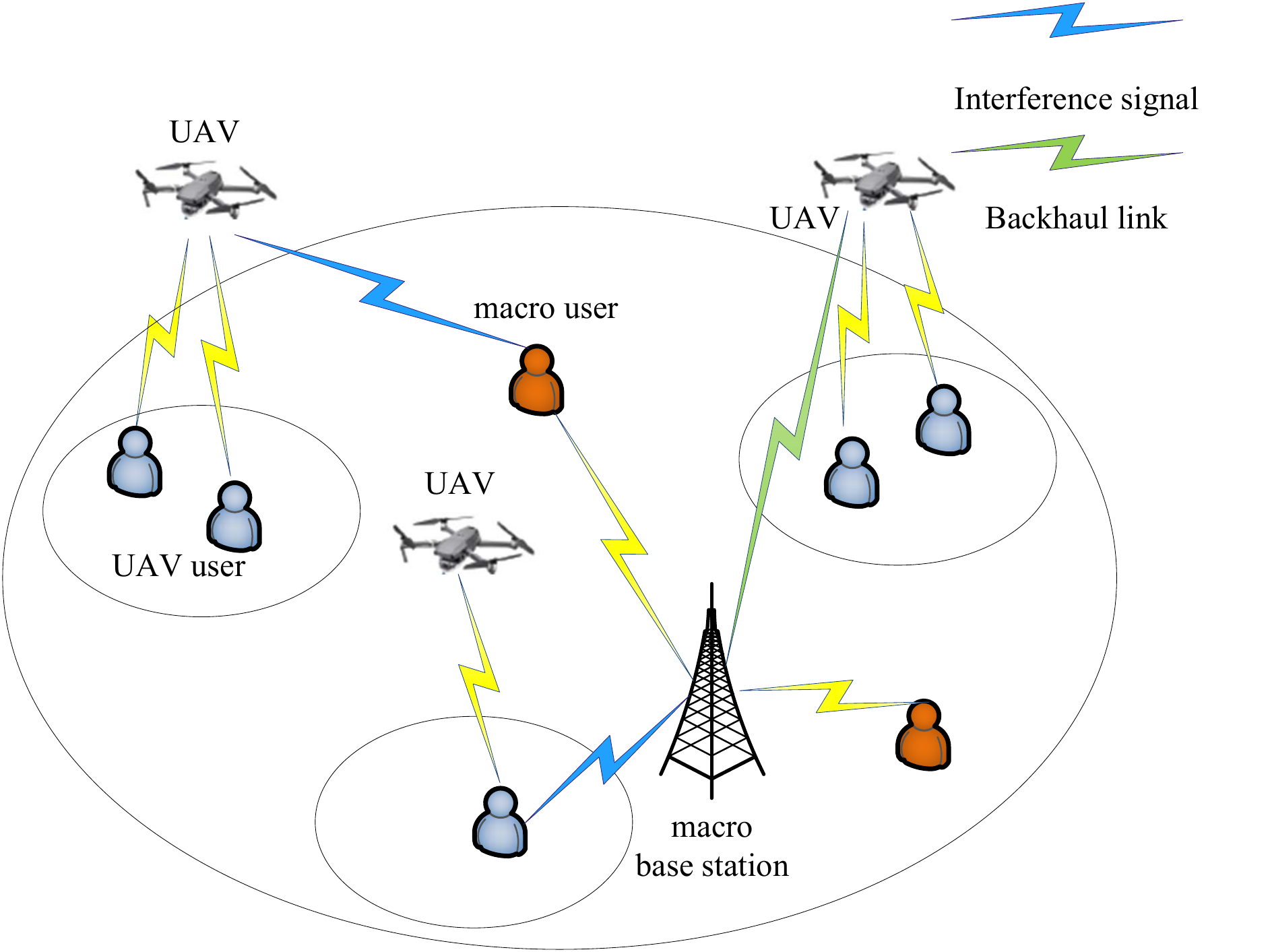}
        \caption{Topology of the UAV network with a macro base station.}
        \label{fig:1}

\end{figure}

The network where one macro BS and I UAVs coexist is shown in Fig. 1. A hovering UAV only serves a UAV cell and $I$ UAV small cells are located in the macrocell range with radius R, $W$ macro users are randomly located in macrocell. $N$ users are provided communication services by UAVs which are randomly located in the UAV cell called UAV users.

The same spectrum are shared between all UAV cells and the macrocell. Let ${h_i}$ represents the altitude of the $i$th UAV. The bandwidth of the channel $BW$ can be divided into $K$ subchannels. Therefore, each subchannel bandwidth ${B_{sc}} = BW/K$. The index of the subchannel is $k$ where $k \in \left\{ {1,2,...,K\} } \right.$.

Let ${N_k} \in \left\{ {{1}} \right.,{2},...{N_k}\} $ express as the users number assigned on the subchannel $k$ ($S{C_k}$) where $N = {N_1} + {N_2} + ... + {N_K}$. The $n$th user assigned on a subchannel is denoted as $n$. Assume that at most two users can be assigned to each subchannel to reduce computational complexity. That is, ${N_k} <= 2$. The power of the $n$th user on $S{C_k}$ in $i$th UAV network is denoted by ${p_{n,i,k}}$. The transmit power of a UAV is ${P_{UAV}}$, then the subchannel
and UAV power limit are shown as $\sum\limits_{n = 1}^{{N_k}} {{p_{n,i,k}} = {p_{n,i}}} $ and $\sum\limits_{k = 1}^K {{p_{n,i}}}  \le {P_{UAV}}$, where ${{p_{n,i}}}$ is the
power allocated on $S{C_k}$. Also, $p_{w,k}^M$ is the BS power allocated to macro user $w$ on subchannel $k$ with limited $\sum\limits_{k = 1}^K {\sum\limits_{w = 1}^W {{p_{w,k}^M}} }  \le {P_{BS}}$, where ${P_{BS}}$ is the maximum BS power.

The signal transmitted by the $i$th UAV network through $S{C_k}$ is expressed as
  \begin{equation}
  {x_{i,k}} = \sum\limits_{n = 1}^{{N_{k}}} {\sqrt {{p_{n,i,k}}} {s_{n,i}}},
  \end{equation}
where ${{s_{n,i}}}$ is the modulated symbol.

The $n$th user of the $i$th UAV receives the signal on subchannel $k$ as
  \begin{equation}
  \begin{split}
  &{y_{n,i,k}} = {H_{n,i,k}}{x_{i,k}} + z_{n,i,k} \\
  &= \sqrt {{p_{n,i,k}}} {H_{n,i,k}}{s_{n,i}} + \sum\limits_{j = 1,j \ne n}^{{N_k}} {\sqrt {{p_{j,i,k}}} {H_{n,i,k}}{s_{j,i}}}  + z_{n,i,k},
  \end{split}
  \end{equation}
where ${H_{n,i,k}} = PL{(d)_{n,i,k}}{g_{n,i,k}}$, ${g_{n,i,k}} \sim CN(0,1)$ is a complex Gaussian random variable representing the small-scale fading in the subchannel $k$ of the $i$th UAV to the $n$th user. $PL{(d)_{n,i,k}}$ is the path loss from $i$th UAV to $n$th user with distance $d$. The term ${z_{n,i,k}} \sim CN(0,\sigma ^2)$ is the white Gaussian noise which has a mean zero and variance $\sigma ^2$.

Assume that the same subchannel has the same channel fading which may be different across subchannels. Unlike previous articles that only considered UAV-to-user path loss, small-scale
fading is also considered in our paper. Because in a user-intensive area, if a disaster occurs,
the low-altitude UAV can be served as an air BS and obstacles block the LOS transmission.
Small-scale fading occurs due to the obstacles causing multipath transmission in the subchannel.
However, differ with the channel fading from the BS to macro users, the path loss between the
hovering UAV and UAV users including LOS and NLOS condition. The part among UAVs
and urban buildings in free space is LOS, and the remaining part from the obstacle to users is
NLOS. Because of the shadow effect and obstructions of dense urban buildings, fading in NLOS
part is much larger than the LOS condition.

As for the probability of LOS in the low-altitude link between hovering UAVs and users depends upon the surroundings, the elevation angle, and the distribution of users and UAVs. The probability of LOS link is written as \cite{5Optimal}
  \begin{equation}
  {P_{LOS}} = {1 \over {1 + u\exp ( - v[\phi  - A])}},
  \end{equation}
where $u$ and $v$ are environment parameters including the height of obstacles, etc. The elevation angle $\phi $ is expressed as
  \begin{equation}
  \phi   = {{180} \over \pi } \times {\arcsin }({h \over d}),
  \end{equation}
where $h$ represents the hovering height of UAVs, $d$ represents the distance from UAV to users.
Assume that the path loss between UAVs and the user strictly follows both LOS and NLOS
\cite{5Optimal}. Then, the NLOS link probability is presented as
  \begin{equation}
  {P_{NLOS}} = 1 - {P_{LOS}}.
  \end{equation}

Therefore, the total path loss from UAVs to users is shown as follows\cite{10Unmanned}
  \begin{equation}
  PL(d) = {P_{LOS}} \times {(d)^{ - \alpha }} + {P_{NLOS}} \times \eta {(d)^{ - \alpha }},
  \end{equation}
where $\alpha $ represents path loss index between users and UAV connection, and $\eta $ is an extra attenuation coefficient because of the NLOS link.

If users can perfectly perceive the CSI in the UAV low-altitude communication channel, the
signal-to-interference-plus-noise ratio (SINR) in the $i$th UAV network for UAV user $n$ occupying
the kth subchannel is
  \begin{equation}
  SIN{R_{n,i,k}} = {{{p_{n,i,k}}|{H_{n,i,k}}{|^2}} \over {p_k^M|H_{n,i,k}^M{|^2} + \sum\limits_{j = 1,j \ne n}^{{N_k}} {{p_{j,i,k}}|{H_{n,i,k}}{|^2}}  + {\sigma}^2}},
  \end{equation}
${H_{n,i,k}^M}$ is channel gain between macro BS and $i$th UAV user $n$ through subchannel $k$. ${p_k^M|H_{n,i,k}^M{|^2}}$ represents the interference suffered by UAV users on $S{C_k}$ from BS. ${\sum\limits_{j = 1,j \ne n}^{{N_k}} {{p_{j,i,k}}|{H_{n,i,k}}{|^2}} }$ represents the interference resulted by the other users in the same subchannel $k$ due to NOMA. It is verified in \cite{28Geometrical} that the directional antenna has high channel correlation in the mobile-to-mobile channel. Therefore, the users of the UAV cell are considered to use directional antennas, the UAV-to-UAV interference can be negligible compared to cross-tier macro-to-UAV interference \cite{29Joint}.

The SIC technology is widely used in NOMA networks to reduce interference between co-frequency users. NOMA system assigns much power to users having lower channel gain. So the user with a better channel gain condition can eliminate interference created by other users in a poorer channel condition on a subchannel. Consider that for $i$th UAV, the gain of ${N_k}$ users on $k$th subchannel is ordered as
  \begin{equation}
  |{H_{1,i,k}}| \le |{H_{2,i,k}}| \le ...{\rm{ }}|{H_{n,i,k}}| \le ... \le |{H_{{N_k},i,k}}|.
  \end{equation}

So that for user $j > n$, then $|{H_{n,i,k}}| \le |{H_{j,i,k}}|$, $n$th UAV user is able to decode the signal
because interference created by user $j$ can be removed. If user $j < n$, signal from user $j$ is
considered interference. Therefore, when the receiver applies SIC technology, the SINR of user
$n$ is rewritten as

  \begin{equation}
  SIN{R_{n,i,k}} = {{{p_{n,i,k}}|{H_{n,i,k}}{|^2}} \over {p_k^M|H_{n,i,k}^M{|^2} + \sum\limits_{j = 1}^{n - 1} {{p_{j,i,k}}|{H_{n,i,k}}{|^2}}  + {\sigma }^2}}.
  \end{equation}

According to Shannon Theory, the maximum capacity of the $n$th user on $S{C_k}$ in $i$th UAV network is
  \begin{equation}
  {C_{n,i,k}} = {B_{sc}}{\log _2}(1 + SIN{R_{n,i,k}}).
  \end{equation}

 Because of large scale fading such as path loss changes slowly, we assume that the BS as well as UAVs can estimate it perfectly \cite{30Impact}. Therefore, in this paper, imperfect CSI especially as small scale fading between hovering UAVs and users is considered. The energy efficiency optimization in UAV network is studied with the estimated fading channel. The small scale fading coefficient for the $i$th UAV to the $n$th user on $k$th subchannel is shown as
  \begin{equation}
  {g_{n,i,k}} =  {\hat{g}_{n,i,k}} + {e_{n,i,k}},
  \end{equation}
where channel estimated error ${e_{n,i,k}}$ is a complex Gaussian distribution that has zero-mean and variance $\sigma _e^2$. And ${\hat{g}_{n,i,k}} \sim CN(0,1 - \sigma _e^2)$ represents the estimated small-scale fading channel coefficient. The estimated channel gain is
 \begin{equation}
 {{\hat H }_{n,i,k}} = PL{(d)_{n,i,k}}{\hat{g}}_{n,i,k}= PL{(d)_{n,i,k}}{\hat{g}_{n,i,k}} + {e_{n,i,k}}.
 \end{equation}
 Then channel estimated error and known CSI are combined to reorder imperfect channel gain as follows
   \begin{equation}
   \label{order}
  |{{\hat H }_{1,i,k}}| \le |{{\hat H }_{2,i,k}}| \le ...{\rm{ }}|{{\hat H }_{n,i,k}}| \le ... \le |{{\hat H }_{{N_k},i,k}}|.
  \end{equation}
According to SIC, the estimated ${{\hat{SINR}} _{n,i,k}}$ and data rate are written as
  \begin{equation}
  {{\hat{SINR}} _{n,i,k}} = {{{p_{n,i,k}}|{{\hat H }_{n,i,k}}{|^2}} \over {p_k^M|H_{n,i,k}^M{|^2} + \sum\limits_{j = 1}^{n - 1} {{p_{j,i,k}}|{{\hat H}_{n,i,k}}{|^2}}  + {\sigma ^2}}},
  \end{equation}
  \begin{equation}
  {R_{n,i,k}} = {B_{sc}}{\log _2}(1 + {{\hat{SINR}} _{n,i,k}}).
  \end{equation}

Considering the channel changes caused by the UAV position and the estimation error
in subchannels, the outage probability is introduced to measure system performance. Therefore, the average
outage total rate is depicted as \cite{31Energy}
  \begin{equation}
  \hat {R}  = \sum\limits_{i = 1}^I {\sum\limits_{k = 1}^K {\sum\limits_{n = 1}^{{N_k}} {{R_{n,i,k}}(Pr [{C_{n,i,k}} > {R_{n,i,k}}|{{\hat g}_{n,i,k}}]} } } ).
  \end{equation}

In the case of ${\hat g}_{n,i,k}$, there is a non-zero outage probability. Therefore, the total rate of successful transmission is the average outage probability.

Then, the energy efficiency of user $n$ on subchannel $k$ in UAV $i$ is
  \begin{equation}
  {E{E_{n,i,k}}}={{{(Pr [{C_{n,i,k}} > {R_{n,i,k}}|{{\hat g }_{n,i,k}}]){R_{n,i,k}}} \over {{p_m} + {p_{n,i,k}}}}}.
  \end{equation}
where ${{p_m}}$ represents the mechanical energy consumption per unit time to keep hovering against gravity, and is generally considered as constant \cite{032Areial}.

Besides the mechanical energy consumption by the hovering of the UAV, the transmit power for providing communication services to ground users is also part of the energy consumption. However, the power of each UAV is limited, it is essential to achieve greater energy efficiency by saving transmit power. The energy efficiency objective function of our paper is to optimize the limited transmit power to achieve a larger system rate.
Considering all users and all subchannels occupied in UAV cells, the total energy efficiency in
our paper should be depicted as
  \begin{equation}
  \begin{split}
  &EE(U,P) = \sum\limits_{i = 1}^I {\sum\limits_{k = 1}^K {\sum\limits_{n = 1}^{{N_k}} {E{E_{n,i,k}}} } } \\
  &= \sum\limits_{i = 1}^I {\sum\limits_{k = 1}^K {\sum\limits_{n = 1}^{{N_k}} {{{(Pr [{C_{n,i,k}} > {R_{n,i,k}}|{{\hat g }_{n,i,k}}]){R_{n,i,k}}} \over {{p_m} + {p_{n,i,k}}}}} } },
  \end{split}
  \end{equation}
 $U$ represents the matrix of users and
subchannels matching, and $P$ represents the matrix of power allocation. The energy efficiency
problem in the UAV network are resolved by optimizing $U$ and $P$.

The energy efficiency optimization in the UAV cells with imperfect CSI should consider restrictions as follows:
\begin{itemize}
  \item \emph{Transmit power constraint of each UAV:}
    \begin{equation}
    {\rm{ }}\sum\limits_{k = 1}^K {\sum\limits_{n = 1}^{{N_k}} {{p_{n,i,k}}} }  \le {P_{UAV}},{\rm{  }}\forall i
    \end{equation}
    where $P_{UAV}$ is the maximum transmit power of each UAV.
  \item \emph{Interference limitation for macro users:}
    \begin{equation}
    {\rm{ }}\sum\limits_{i = 1}^I {{p_{i,k}}|H_{w,i,k}^M{|^2} \le {I_k}} ,{\rm{ }}\forall w,k{\rm{  }}
    \end{equation}
    On the one hand, multiple UAVs provide access services for more ground users. On the other hand, UAVs bring inevitable interference to users who are still using ground BS for communication service. Therefore, the interference restrictions of macro users also need to be considered.
  \item \emph{Outage probability constraint:}
    The UAV cannot provide excellent service to the user when an outage occurs. In order to guarantee the QoS, it is important to consider the outage probability constraint.
    \begin{equation}
    {\rm{ }}\Pr [{C_{n,i,k}} < {R_{n,i,k}}|{\hat{g} _{n,i,k}}] \le {\varepsilon _{out}},{\rm{ }}\forall i,n,k.
    \end{equation}
\end{itemize}

Therefore, the optimization problem with related constraints in the downlink UAV cells is formulated as

\begin{equation}
\label{originalobj}
\max {\rm{ }}\mathop {EE}\limits_{(U,P)} (U,P)
\end{equation}
\begin{equation}
\label{originalconstraints}
\begin{aligned}
& \text{s.t.}
& & {C1:{\rm{ }}\sum\limits_{k = 1}^K {\sum\limits_{n = 1}^{{N_k}} {{p_{n,i,k}}} }  \le {P_{UAV}},{\rm{  }}\forall i }\\
&&& {C2:{\rm{ }}{p_{n,i,k}} \ge 0,{\rm{ }}\forall i,n,k }\\
&&& {C3:{\rm{ }}\sum\limits_{i = 1}^I {{p_{i,k}}|H_{w,i,k}^M{|^2} \le {I_k}} ,{\rm{ }}\forall w,k{\rm{  }} }\\
&&& {C4:{\rm{ }}\Pr [{C_{n,i,k}} < {R_{n,i,k}}|{\hat{g} _{n,i,k}}] \le {\varepsilon _{out}},{\rm{ }}\forall i,n,k}\\
&&& {C5:{\rm{ }}{N_k} \le 2.}\\
\end{aligned}
\end{equation}
$C1$ is total transmission power constraint of each UAV cell; $C2$ represents non-negative power
assigned to each UAV user, and ensures that each user can perform normal communication
services; $C3$ is the interference restriction imposed by the UAV network on the macro user with
maximum tolerable interference level ${{I_k}}$ per macro user; $C4$ expresses the outage probability
threshold ${\varepsilon _{out}}$ and $C5$ means that at most two users are assigned on each subchannel considering multiplexing due to computational complexity in our paper.

\section{Energy Efficiency Optimization and Resource Allocation}

Since the objective function in (\ref{originalobj}) and $C4$ in (\ref{originalconstraints}) are non-convex, we introduce an outage threshold to remove probability constraints from problems, and rewrite the energy efficiency
function in the UAV network. The target problem is decoupled into subchannel-user match and
power control. In other words, power is allocated based on user scheduling to optimize energy
efficiency considering imperfect CSI.

\subsection{Optimization Problem Transformation}

The original problem has a probability constraint and belongs to a non-convex function. To remove the probability constraint, the target problem is simplified by the outage threshold ${{\varepsilon _{out}}}$. First, the actual and maximum achievable SINR are respectively written as
  \begin{equation}
  {\hat {SINR}_{n,i,k}}{\rm{ = }}{{a_{n,i,k}^1} \over {a_{n,i,k}^2}} = {2^{{{{R_{n,i,k}}} \over {{R_{SC}}}}}} - 1,
  \end{equation}

  \begin{equation}
  SIN{R_{n,i,k}}{\rm{ = }}{{b_{n,i,k}^1} \over {b_{n,i,k}^2}},
  \end{equation}
where $b_{n,i,k}^1 = {p_{n,i,k}}|{{\hat H }_{n,i,k}}{|^2}$ and $b_{n,i,k}^2 = p_k^M|H_{n,i,k}^M{|^2} + \sum\limits_{j = 1}^{n - 1} {{p_{j,i,k}}|{{\hat H }_{n,i,k}}{|^2}}  + {\sigma ^2}$. Therefore, the probability that the actual instantaneous rate is more than the maximum achievable capacity is
derived as follows
  \begin{equation}
  \label{CSI}
    \begin{aligned}
      &\Pr [{C_{n,i,k}} < {R_{n,i,k}}|{\hat {g} _{n,i,k}}] \\
      &= \Pr [SIN{R_{n,i,k}}<{\hat {SINR}_{n,i,k}}|{\hat {g} _{n,i,k}}]\\
      &= \Pr [{{b_{n,i,k}^1} \over {b_{n,i,k}^2}} < {2^{{{{R_{n,i,k}}} \over {{R_{SC}}}}}} - 1|{\hat {g} _{n,i,k}}]\\
      &= \Pr [{{b_{n,i,k}^1} \over {b_{n,i,k}^2}} < {2^{{{{R_{n,i,k}}} \over {{R_{SC}}}}}} - 1]\Pr [b_{n,i,k}^1 \le a_{n,i,k}^1|{\hat {g} _{n,i,k}}]\\
      &+ \Pr [{{b_{n,i,k}^1} \over {b_{n,i,k}^2}} < {2^{{{{R_{n,i,k}}} \over {{R_{SC}}}}}} - 1]\Pr [b_{n,i,k}^1 > a_{n,i,k}^1|{\hat {g} _{n,i,k}}].
    \end{aligned}
  \end{equation}

Then there are more strict constraints to satisfy $\Pr [{C_{n,i,k}} < {R_{n,i,k}}|{\hat {g} _{n,i,k}}] \le {\varepsilon _{out}}$ as follows\cite{25Joint,31Energy}
  \begin{equation}
  \label{CSIconstraint1}
  \Pr [b_{n,i,k}^2 \ge a_{n,i,k}^2|{\hat {g}_{n,i,k}}] \le {\varepsilon _{out}}/2,
  \end{equation}
  \begin{equation}
  \label{CSIconstraint2}
  \Pr [b_{n,i,k}^1 \le a_{n,i,k}^1|{\hat {g} _{n,i,k}}] = {\varepsilon _{out}}/2.
  \end{equation}
To prove it, the following is derived
  \begin{equation}
  %\eqalign{
    \begin{split}
      &\Pr [b_{n,i,k}^2 \ge a_{n,i,k}^2|{\hat{ g}_{n,i,k}}]
      =\Pr [{{a_{n,i,k}^1} \over {b_{n,i,k}^2}} \le {{a_{n,i,k}^1} \over {a_{n,i,k}^2}}|{\hat {g} _{n,i,k}}]\\
      &= \Pr [{{a_{n,i,k}^1} \over {b_{n,i,k}^2}} \le {2^{{{{R_{n,i,k}}} \over {{R_{SC}}}}}} - 1|{\hat {g} _{n,i,k}}] \le {\varepsilon _{out}}/2.
    \end{split}
  \end{equation}
Therefore,
  \begin{equation}
  \Pr [{{b_{n,i,k}^1} \over {b_{n,i,k}^2}} < {2^{{{{R_{n,i,k}}} \over {{R_{SC}}}}}} - 1|b_{n,i,k}^1 > a_{n,i,k}^1,{\hat {g} _{n,i,k}}] \le {\varepsilon _{out}}/2.
  \end{equation}
Thanks to the following two formulas
  \begin{equation}
  \Pr [b_{n,i,k}^1 > a_{n,i,k}^1|{\hat {g} _{n,i,k}}] = 1 - {\varepsilon _{out}}/2,
  \end{equation}
  \begin{equation}
  \Pr [{{b_{n,i,k}^1} \over {b_{n,i,k}^2}} < {2^{{{{R_{n,i,k}}} \over {{R_{SC}}}}}} - 1|b_{n,i,k}^1 \le a_{n,i,k}^1,{\hat {g}_{n,i,k}}] \le 1.
  \end{equation}
Therefore, when ${\varepsilon _{out}} \ll 1$ the constraint in (\ref{CSI}) approximates as
  \begin{equation}
  \Pr [{C_{n,i,k}} < {R_{n,i,k}}|{\hat {g} _{n,i,k}}] \le {{{\varepsilon _{out}}} \over 2} + ({{{\varepsilon _{out}}} \over 2})(1 - {{{\varepsilon _{out}}} \over 2}) \approx {\varepsilon _{out}}.
  \end{equation}
 According to the definition of Markov inequality \cite{31Energy,33Interference}, Then (\ref{CSIconstraint1}) and (\ref{CSIconstraint2}) are rewritten as following
  \begin{equation}
  \label{rewriteCSI1}
    \begin{split}
      &\Pr [b_{n,i,k}^2 \ge a_{n,i,k}^2|{\hat {g} _{n,i,k}}]\\
      &= \Pr [\sum\limits_{j = 1}^{n - 1} {{p_{j,i,k}}|{{\hat H }_{n,i,k}}{|^2}}+ {\sigma ^2} + p_k^M|H_{n,i,k}^M{|^2}  \ge a_{n,i,k}^2 |{\hat {g} _{n,i,k}}] \\
      &= \Pr [\sum\limits_{j = 1}^{n - 1} {{p_{j,i,k}}|{{\hat H }_{n,i,k}}{|^2}}  \ge a_{n,i,k}^2 - {\sigma ^2} - p_k^M|H_{n,i,k}^M{|^2}|{\hat {g} _{n,i,k}}] \\
      &\le {{E[\sum\limits_{j = 1}^{n - 1} {{p_{j,i,k}}|{{\hat H }_{n,i,k}}{|^2}} ]} \over {a_{n,i,k}^2 - {\sigma ^2} - p_k^M|H_{n,i,k}^M{|^2}}} = {{\sum\limits_{j = 1}^{n - 1} {{p_{j,i,k}}|{{\hat H }_{n,i,k}}{|^2}} } \over {a_{n,i,k}^2 - {\sigma ^2} - p_k^M|H_{n,i,k}^M{|^2}}},
    \end{split}
  \end{equation}

  \begin{equation}
    \begin{split}
      &\Pr [b_{n,i,k}^1 \le a_{n,i,k}^1|{\hat {g} _{n,i,k}}]\\
      & = \Pr [{p_{n,i,k}}|{{\hat H }_{n,i,k}}{|^2} \le a_{n,i,k}^1|{\hat {g} _{n,i,k}}]\\
      & = \Pr [|{g_{n,i,k}}{|^2} \le {{a_{n,i,k}^1} \over {{p_{n,i,k}}|PL{|^2}}}|{\hat {g} _{n,i,k}}]\\
      & = {F_{|{g_{n,i,k}}{|^2}}}({{a_{n,i,k}^1} \over {{p_{n,i,k}}|PL{|^2}}})\\
      & = 1 - {Q_1}(\sqrt {{{2|{g_{n,i,k}}{|^2}} \over {\sigma _e^2}}} ,\sqrt {{{2a_{n,i,k}^1} \over {\sigma _e^2{p_{n,i,k}}|PL{|^2}}}} )\\
      & = {\varepsilon _{out}}/2,
    \end{split}
  \end{equation}
where $|{g_{n,i,k}}{|^2} \sim CN({\hat {g} _{n,i,k}},{\sigma ^2})$ and ${Q_1}(X,Y)$ is the Marcum Q-function. $PL$ is the path loss related to distance, which is a simple way of writing $PL{(d)_{n,i,k}}$. Therefore
  \begin{equation}
  a_{n,i,k}^1 = F_{|{g_{n,i,k}}{|^2}}^{ - 1}({\varepsilon _{out}}/2) \cdot {p_{n,i,k}}|PL{|^2}.
  \end{equation}
Then (\ref{rewriteCSI1}) is rewritten as
  \begin{equation}
    \begin{split}
    &{{\sum\limits_{j = 1}^{n - 1} {{p_{j,i,k}}|{{\hat H }_{n,i,k}}{|^2}} } \over {a_{n,i,k}^1/({2^{{{{R_{n,i,k}}} \over {{R_{SC}}}}}} - 1) - {\sigma ^2} - p_k^M|H_{n,i,k}^M{|^2}}} \\
    &= {{\sum\limits_{j = 1}^{n - 1} {{p_{j,i,k}}|PL{|^2}(|{{\hat {g} }_{n,i,k}}{|^2} + \sigma _e^2)} } \over {{{F_{|{g_{n,i,k}}{|^2}}^{ - 1}({\varepsilon _{out}}/2) \cdot {p_{n,i,k}}|PL{|^2}} \over {{2^{{{{R_{n,i,k}}} \over {{R_{SC}}}}}} - 1}} - {\sigma ^2} - p_k^M|H_{n,i,k}^M{|^2}}} = {{{\varepsilon _{out}}} \over 2}.
    \end{split}
  \end{equation}
where refering to (\ref{CSIconstraint1}) and this formula equal to ${{{\varepsilon _{out}}} \over 2}$ . Let ${\Theta _{n,i,k}} = {\varepsilon _{out}}({\sigma ^2} + p_k^M|H_{n,i,k}^M{|^2})$, and ${\Psi _{n,i,k}}{\rm{ = }}{|P{L_{n,i,k}}{|^2}}{|{{\hat {g}} _{n,i,k}}{|^2} + \sigma _e^2}$. The SINR considering the outage probability of users with imperfect CSI is given by
  \begin{equation}
  {\hat {SINR} _{n,i,k}} = {{{\varepsilon _{out}}F_{|{g_{n,i,k}}{|^2}}^{ - 1}({\varepsilon _{out}}/2) \cdot {p_{n,i,k}}|PL{|^2}} \over {{\Theta _{n,i,k}} + 2{\Psi _{n,i,k}}\sum\limits_{j = 1}^{n - 1} {{p_{j,i,k}}} }}.
  \end{equation}

Multiple users can occupy one subchannel at the same time because of NOMA. In this paper, ${N_k} $ is limited to no more than two users. Assume channel gain $|{{\hat H }_{2,i,k}}| < |{{\hat H }_{1,i,k}}|$, $\forall i,k$, then energy efficiency problem in UAV cells is expressed as (\ref{CSIobj}), where ${\beta _{j,i,k}},j = 1,2$ is ratio for ${{{p_{j,i,k}}} \over {{p_{i,k}}}}$ and ${{p_{i,k}}}$ is power allocated on $S{C_k}$. Our resource allocation method in the UAV network is discussed by two steps based on channel and power allocation. Then two iterative algorithms are designed to deal with the network energy efficiency optimization for NOMA UAV network.

  \newcounter{TempEqCnt}
\setcounter{TempEqCnt}{\value{equation}}
\setcounter{equation}{38}
\begin{figure*}[hb]
\begin{equation}
\label{CSIobj}
\begin{aligned}
 &\mathop {\max }\limits_{\beta ,P} EE =\sum\limits_{i = 1}^I {\sum\limits_{k = 1}^K {\sum\limits_{n = 1}^{{N_k}} {{{(Pr [{C_{n,i,k}} > {R_{n,i,k}}|{{\hat g }_{n,i,k}}]){R_{n,i,k}}} \over {{p_m} + {p_{n,i,k}}}}} } } \\
      &= \mathop {\max }\limits_{\beta ,P} \sum\limits_{i = 1}^I \sum\limits_{k = 1}^K (1 - {\varepsilon _{out}})({{{B_{sc}}{{\log }_2}({\rm{1 + }}{{F_{|{g_{{\rm{1}},i,k}}{|^2}}^{ - 1}({{{\varepsilon _{out}}} \over 2}) \cdot |P{L_{1,i,k}}{|^2}{\beta _{{\rm{1}},i,k}}{p_{i,k}}} \over {{\sigma ^2} + p_k^M|H_{1,i,k}^M{|^2}}})} \over {{p_m} + {\beta _{{\rm{1}},i,k}}{p_{i,k}}}}
      + {{{B_{sc}}{{\log }_2}({\rm{1 + }}{{{\varepsilon _{out}}F_{|{g_{2,i,k}}{|^2}}^{ - 1}({{{\varepsilon _{out}}} \over 2}) \cdot |P{L_{2,i,k}}{|^2}{\beta _{{\rm{2}},i,k}}{p_{i,k}}} \over {{\Theta _{2,i,k}} + 2{\Psi _{n,i,k}}{\beta _{{\rm{1}},i,k}}{p_{i,k}}}})} \over {{p_m} + {\beta _{{\rm{2}},i,k}}{p_{i,k}}}}),
\end{aligned}
\end{equation}
\end{figure*}

\subsection{Energy-Efficient User Scheduling}

To begin with, the power of each subchannel is set to equal in the UAV cells. NOMA allows each subchannel to serve multiple users at the same time. Only two users are assigned to each subchannel and each user occupies only one subchannel in this paper. Therefore, the user side and the subchannel side are two-side match. Users select the subchannel at first, then the subchannel side selects the user pair. First, the matching choices of users and subchannels are sorted according to the imperfect CSI from UAVs to users. The user prefers $SC{_i}{\rm{ }}$ if the channel gain estimated on the $SC{_i}{\rm{ }}$ is greater than $SC{_j}{\rm{ }}$. As for the subchannel side, if the maximum energy efficiency on this subchannel achieved by user pair ${U_i}$ is greater than ${U_j}$, the user pair ${U_i}$ will be matched to the subchannel. Then our goal is to find optimal user pair to obtain maximum energy efficiency on each subchannel to complete user scheduling.

When considering the match of users and subchannels, because the (\ref{CSIobj}) is non-convex for ${\beta}$,
DC programming which is represented as a minimization of a difference of two convex functions, is used to solve this problem as follows.
  \begin{equation}
  \label{DCaim}
  \mathop {\min }\limits_{\beta  \in (0,1)}  F(\beta ) = \mathop {\min }\limits_{\beta  \in (0,1)} {F_1}(\beta ) - {F_2}(\beta ).
  \end{equation}

Then objective function (\ref{CSIobj}) is rewritten as (\ref{41}) and (\ref{42}), where $\nabla {F_2}({\beta })$ is the gradient of ${F_2}(\beta )$ , ${\Delta _{n,i,k}} = {\varepsilon _{out}}({\sigma ^2} + p_k^M|H_{n,i,k}^M{|^2}){\rm{ + }}{\varepsilon _{out}}F_{|{g_{n,i,k}}{|^2}}^{ - 1}({{{\varepsilon _{out}}} \over 2}) \cdot |P{L_{n,i,k}}{|^2}{p_{i,k}}$.
 % \newcounter{TempEqCnt}
%\setcounter{TempEqCnt}{\value{equation}}
\setcounter{equation}{40}
  \begin{figure*}[h]
  \begin{equation}
  \label{41}
  {F_1}(\beta ) =  - {{{B_{sc}}{{\log }_2}({\rm{1 + }}{{F_{|{g_{{\rm{1}},i,k}}{|^2}}^{ - 1}({{{\varepsilon _{out}}} \over 2}) \cdot |P{L_{1,i,k}}{|^2}{\beta _{{\rm{1}},i,k}}{p_{i,k}}} \over {{\sigma ^2} + p_k^M|H_{1,i,k}^M{|^2}}})} \over {{p_m}+ {\beta _{{\rm{1}},i,k}}{p_{i,k}}}}
  + {{{B_{sc}}{{\log }_2}({\Theta _{2,i,k}} + 2{\Psi _{2,i,k}}{\beta _{{\rm{1}},i,k}}{p_{i,k}})} \over {{p_m} + (1 - {\beta _{1,i,k}}){p_{i,k}}}},
  %\end{split}
  \end{equation}
\end{figure*}

\setcounter{equation}{41}
  \begin{figure*}[h]
  \begin{equation}
  \label{42}
  {F_2}(\beta ) = {{{B_{sc}}{{\log }_2}((2{\Psi _{2,i,k}}) - {\varepsilon _{out}}F_{|{g_{2,i,k}}{|^2}}^{ - 1}({{{\varepsilon _{out}}} \over 2}) \cdot |P{L_{2,i,k}}{|^2}){\beta _{{\rm{1}},i,k}}{p_{i,k}} + {\Delta _{2,i,k}})} \over {{p_m} + (1 - {\beta _{1,i,k}}){p_{i,k}}}},
  \end{equation}
\end{figure*}

\setcounter{equation}{42}
  \begin{figure*}[h]
    \begin{equation}
  %\begin{split}
    \nabla {F_2}({\beta })
    = {B_{sc}}{{{{(2{\Psi _{2,i,k}}) - {\varepsilon _{out}}F_{|{g_{2,i,k}}{|^2}}^{ - 1}({{{\varepsilon _{out}}} \over 2}) \cdot |P{L_{2,i,k}}{|^2}){p_{i,k}}({p_m} + (1 - {\beta _{1,i,k}}){p_{i,k}})} \over {((2{\Psi _{2,i,k}}) - {\varepsilon _{out}}F_{|{g_{2,i,k}}{|^2}}^{ - 1}({{{\varepsilon _{out}}} \over 2}) \cdot |P{L_{2,i,k}}{|^2}){\beta _{{\rm{1}},i,k}}{p_{i,k}} + {\Delta _{2,i,k}})\ln 2}}  } \over {{{({p_m} + (1 - {\beta _{1,i,k}}){p_{i,k}})}^2}}}
   -{B_{sc}}{{{F_2}(\beta )( - {p_{i,k}})}\over {{{({p_m} + (1 - {\beta _{1,i,k}}){p_{i,k}})}^2}}},
  %\end{split}
  \end{equation}
\end{figure*}

A suboptimal approach is designed to achieve user scheduling by replacing $ - {F_2}(\beta )$ in (\ref{DCaim}) with $ - {F_2}(\beta ) - \nabla {F_2}({\beta ^t})$. First the convexity of ${F_1}(\beta )$ and ${F_2}(\beta )$ is proved in Proposition 1 as follows \cite{34Energy}
%lemma
\begin{mypro}
If $-s(\beta )={{B_{sc}}{{\log }_2}({\rm{1 + }}{{F_{|{g_{{\rm{1}},i,k}}{|^2}}^{ - 1}({{{\varepsilon _{out}}} \over 2}) \cdot |P{L_{1,i,k}}{|^2}{\beta _{{\rm{1}},i,k}}{p_{i,k}}} \over {{\sigma ^2} + p_k^M|H_{1,i,k}^M{|^2}}})}$ is strictly concave in $\beta_{1,i,k}$, $-S(\beta )={{{B_{sc}}{{\log }_2}({\rm{1 + }}{{F_{|{g_{{\rm{1}},i,k}}{|^2}}^{ - 1}({{{\varepsilon _{out}}} \over 2}) \cdot |P{L_{1,i,k}}{|^2}{\beta _{{\rm{1}},i,k}}{p_{i,k}}} \over {{\sigma ^2} + p_k^M|H_{1,i,k}^M{|^2}}})} \over {{p_m} + {\beta _{{\rm{1}},i,k}}{p_{i,k}}}}$ is strictly quasiconcave.
\end{mypro}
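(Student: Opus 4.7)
The plan is to exploit the classical fact that the ratio of a strictly positive, strictly concave function over a strictly positive affine function is strictly quasiconcave on any convex domain, and then specialize to the present setting. With the decision variable $\beta_{1,i,k}\in(0,1)$ (a convex set), the numerator $-s(\beta)$ is positive because its argument to $\log_2$ strictly exceeds $1$, and the denominator $p_m+\beta_{1,i,k}p_{i,k}$ is positive because the hovering energy satisfies $p_m>0$. The hypotheses of the general fact are therefore in force.

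I would prove strict quasiconcavity directly from its definition. Take distinct $\beta',\beta''\in(0,1)$ and $\lambda\in(0,1)$, let $\beta^{\star}=\lambda\beta'+(1-\lambda)\beta''$, and assume without loss of generality that $-S(\beta')\le -S(\beta'')$. Set $\gamma:=-S(\beta')$ and introduce
\begin{equation}
g(\beta) \;:=\; -s(\beta) \;-\; \gamma\bigl(p_m+\beta\, p_{i,k}\bigr).
\end{equation}
Because $g$ is the sum of the strictly concave $-s(\beta)$ (by hypothesis) and an affine term, $g$ itself is strictly concave in $\beta$. By construction $g(\beta')=0$, and the inequality $-S(\beta'')\ge\gamma$ combined with positivity of the denominator gives $g(\beta'')\ge 0$. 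Strict concavity then yields
\begin{equation}
g(\beta^{\star}) \;>\; \lambda g(\beta') + (1-\lambda)g(\beta'') \;\ge\; 0,
\end{equation}
so dividing by $p_m+\beta^{\star}p_{i,k}>0$ gives $-S(\beta^{\star})>\gamma=\min\{-S(\beta'),-S(\beta'')\}$, which is exactly the definition of strict quasiconcavity.

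The only step requiring vigilance is the sign manipulation between the ratio $-S$ and the linearized function $g$, which relies on strict positivity of the denominator throughout $(0,1)$; this is automatic from $p_m>0$, independent of $\beta_{1,i,k}$. I deliberately avoid a second-derivative or bordered-Hessian computation, since the auxiliary-function route uses only the given strict-concavity hypothesis and delivers the strict version of the conclusion in one stroke.
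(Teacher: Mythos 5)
Your proof is correct, but it follows a different route from the paper's. The paper argues via level sets: it forms the set $S_a=\{1>\beta_{1,i,k}>0\mid -S(\beta_{1,i,k})\ge\alpha\}$, observes that for $\alpha<0$ it is all of $(0,1)$ and for $\alpha>0$ it can be rewritten as $\{\beta_{1,i,k}\mid 0\ge \alpha(p_m+\beta_{1,i,k}p_{i,k})+s(\beta_{1,i,k})\}$, which is convex because $s$ is convex and the added term is affine, and then concludes quasiconcavity from convexity of all superlevel sets. You instead verify the definition directly through the auxiliary function $g(\beta)=-s(\beta)-\gamma(p_m+\beta p_{i,k})$, which is strictly concave, vanishes at $\beta'$, and is nonnegative at $\beta''$, so that strict concavity forces $g(\beta^\star)>0$ at any strict convex combination. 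The two arguments rest on the same underlying structure (strictly concave numerator over a positive affine denominator on a convex domain), but yours has a genuine advantage in rigor: the level-set characterization as the paper invokes it only certifies quasiconcavity, and the paper's passage from ``$S_a$ is strictly convex'' to ``strictly quasiconcave'' is stated loosely, whereas your chain $g(\beta^\star)>\lambda g(\beta')+(1-\lambda)g(\beta'')\ge 0$ delivers the strict inequality $-S(\beta^\star)>\min\{-S(\beta'),-S(\beta'')\}$ directly from the strict concavity hypothesis. Your remark that positivity of the denominator ($p_m>0$) is the only delicate sign issue is also the right thing to flag; note that positivity of the numerator, which you mention when invoking the classical fact, is not actually needed for your auxiliary-function argument to go through.
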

\begin{proof}
The $\alpha$-sublevel set of function $-S(\beta )$ is expressed as
\begin{equation}
{S_a} = \{ 1 > {\beta _{1,i,k}} > 0| - S({\beta _{1,i,k}}) \ge \alpha \}.
\end{equation}
According to the definition of $\alpha$-sublevel, $- S({\beta _{1,i,k}})$ is strictly concave when ${S_a}$ is strictly convex. $ - S({\beta _{1,i,k}}) > \alpha $ when $ \alpha<0 $. When $\alpha>0$, ${S_a} = \{ 1 > {\beta _{1,i,k}} > 0|0 \ge \alpha ({p_m} + {\beta _{{\rm{1}},i,k}}{p_{i,k}}) + s({\beta _{1,i,k}})\} $ is strictly convex due to the convexity of $s({\beta _{1,i,k}})$.
Therefore, $S(\beta )$ is strictly quasi-convex. Then the other part in ${F_1}(\beta )$ and ${F_2}({\beta ^t})$ are also strictly quasi-convex by a similar proof process.
\end{proof}

Therefore, in this user scheduling scheme, first, the subchannel selects a user with the highest
gain according to the estimated imperfect CSI. If only zero user or one user is assigned to the subchannel, a new user who has not assigned a subchannel is added to this subchannel. Otherwise, any two users of the new user and original two users will form a total of three user pairs, but only one user pair will be assigned to the subchannel. Next the DC algorithm is used to search the optimal power proportion factor of each user pair to obtain the maximum energy
efficiency of each subchannel. The most energy efficient user pair of three pairs is eventually assigned to this subchannel.

\subsection{Energy-Efficient Power Allocation}

On the basis of the subchannel-user matching and ${\beta _{n,i,k}}$ of each subchannel, then different
power is allocated on subchannels to realize maximum energy efficiency in the UAV cell. (\ref{CSIobj}) is rewritten as
  \setcounter{equation}{44}
  \begin{figure*}[h]
    \begin{equation}
    \label{poweraim}
    \begin{split}
      &\mathop {\max }\limits_{{p_{i,k}}} (1 - {\varepsilon _{out}})\sum\limits_{i = 1}^I \sum\limits_{k = 1}^K\{  {{{{B_{sc}}{{\log }_2}({\Theta _{2,i,k}} + 2{\Psi _{2,i,k}}{\beta _{{\rm{1}},i,k}}{p_{i,k}} + {\varepsilon _{out}}F_{|{g_{2,i,k}}{|^2}}^{ - 1}({{{\varepsilon _{out}}} \over 2}) \cdot |P{L_{2,i,k}}{|^2}{\beta _{{\rm{2}},i,k}}{p_{i,k}})} \over {{p_m} + {\beta _{{\rm{2}},i,k}}{p_{i,k}}}}} \\
      & +  {{{{B_{sc}}{{\log }_2}({\rm{1 + }}{{F_{|{g_{{\rm{1}},i,k}}{|^2}}^{ - 1}({{{\varepsilon _{out}}} \over 2}) \cdot |P{L_{1,i,k}}{|^2}{\beta _{{\rm{1}},i,k}}{p_{i,k}}} \over {{\sigma ^2} + p_k^M|H_{1,i,k}^M{|^2}}})} \over {{p_m} + {\beta _{{\rm{1}},i,k}}{p_{i,k}}}}}- {B_{sc}}{{{{{\log }_2}({\Theta _{2,i,k}} + 2{\Psi _{2,i,k}}{\beta _{{\rm{1}},i,k}}{p_{i,k}})} \over {{p_m} + {\beta _{{\rm{2}},i,k}}{p_{i,k}}}}}  \},
    \end{split}
   \end{equation}
   \end{figure*}
The (\ref{poweraim}) is non-convex with respect to ${p_{i,k}}$, which can be proved similarly to \textbf{Proposition 1}. The successive convex approximation method\cite{35Joint} is considered to be a promising approach for handling non-convex problems. In case $x > 0$ and $y > 0$, $ - \log (x) \ge  - \log (y) - {1 \over y}(x - y)$ \cite{32Super}. Referring to the previous inequality relationship, so a part of (\ref{poweraim}) is approximated as
  \begin{equation}
    \begin{split}
    &- {\log _2}({\Theta _{2,i,k}} + 2{\Psi _{2,i,k}}{\beta _{{\rm{1}},i,k}}{p_{i,k}}){\rm{  }}\\
    &\ge  - {\log _2}({\Theta _{2,i,k}} + 2{\Psi _{2,i,k}}{\beta _{{\rm{1}},i,k}}{p_{i,k}}[l]){\rm{ }}\\
    &- {\rm{ }}{{2{\Psi _{2,i,k}}{\beta _{{\rm{1}},i,k}}({p_{i,k}}- {p_{i,k}}[l])} \over {({\Theta _{2,i,k}} + 2{\Psi _{2,i,k}}{\beta _{{\rm{1}},i,k}}{p_{i,k}}[l])\ln 2}}{\rm{            }}.
    \end{split}
  \end{equation}

Then the problem about power is transformed into a convex problem, and the energy efficiency function for power is rewritten as (\ref{SCA}), which subject to $\sum\limits_{i = 1}^I {{p_{i,k}}|H_{w,i,k}^M{|^2} \le {I_k}}$ and ${{N_k} \le 2.}$
    \setcounter{equation}{46}
  \begin{figure*}[h]
    \begin{equation}
  \label{SCA}
    \begin{split}
      & \mathop {\min }\limits_{{p_{i,k}}}(1 - {\varepsilon _{out}})\sum\limits_{i = 1}^I \sum\limits_{k = 1}^K
      \{  -  {{{{B_{sc}}{{\log }_2}({\Theta _{2,i,k}} + 2{\Psi _{2,i,k}}{\beta _{{\rm{1}},i,k}}{p_{i,k}} + {\varepsilon _{out}}F_{|{g_{2,i,k}}{|^2}}^{ - 1}({{{\varepsilon _{out}}} \over 2}) \cdot |P{L_{2,i,k}}{|^2}{\beta _{{\rm{2}},i,k}}{p_{i,k}})} \over {{p_m} + {\beta _{{\rm{2}},i,k}}{p_{i,k}}}}}  \\
      &  - {{{{B_{sc}}{{\log }_2}({\rm{1 + }}{{F_{|{g_{{\rm{1}},i,k}}{|^2}}^{ - 1}({{{\varepsilon _{out}}} \over 2}) \cdot |P{L_{1,i,k}}{|^2}{\beta _{{\rm{1}},i,k}}{p_{i,k}}} \over {{\sigma ^2} + p_k^M|H_{1,i,k}^M{|^2}}})} \over {{p_m} + {\beta _{{\rm{1}},i,k}}{p_{i,k}}}}} {\rm{ + }}{B_{sc}} {{{{{\log }_2}({\Theta _{2,i,k}} + 2{\Psi _{2,i,k}}{\beta _{{\rm{1}},i,k}}{p_{i,k}}[l])} \over {{p_m} + {\beta _{{\rm{2}},i,k}}{p_{i,k}}}}}  \\
      &+ {B_{sc}} {{{2{\Psi _{2,i,k}}{\beta _{{\rm{1}},i,k}}} \over {{\Theta _{2,i,k}} + 2{\Psi _{2,i,k}}{\beta _{{\rm{1}},i,k}}{p_{i,k}}[l])\ln 2}} \cdot {{{p_{i,k}} - {p_{i,k}}[l]} \over {{p_m} + {\beta _{{\rm{2}},i,k}}{p_{i,k}}}} } \}.
    \end{split}
  \end{equation}
   \end{figure*}

%Algorithm Design
\section{Resource Optimization Algorithm Design}

According to the above analysis, two suboptimal algorithms consisting of user scheduling and cross-subchannel power allocation for performance optimization of the UAV network are proposed as follows.

\subsection{User Scheduling Algorithm}

\setcounter{algorithm}{0}
{\renewcommand\baselinestretch{1.0}\small
\begin{algorithm}[!h]\vspace{-1pt}
\caption{User Scheduling Algorithm}
\begin{algorithmic}[1]{\small}
\STATE  \textbf{Initialization:} The matching priority order of users and subchannels based on the imperfect CSI from the UAV to users.
%\REPEAT
%\STATE  \textbf{User Schedule Algorithm}
\STATE  Users first select the subchannel according to priority order, then the user set
${N_k}$ and $\mathop {{N_k}}\limits^ \sim  $ are formed for occupying or not occupying subchannel $SC{_k}$.
\WHILE {$\mathop {{N_k}}\limits^ \sim   \ne {\rm{0 }}$}
\FOR    {each UAV user}
\IF     {${N_k} < 2$}
\STATE  Assign a new user to the $SC{_k}$ and remove it from $\mathop {{N_k}}\limits^ \sim  $.
\ENDIF
\IF     {${N_k} = 2$}
\STATE  {Find the user pair according to the DC algorithm as \textbf{Algorithm 2} to maximize energy efficiency on the subchannel. The new user pair is assigned to the subchannel, and the other user is placed in the $\mathop {{N_k}}\limits^ \sim  $}.
\ENDIF
\ENDFOR
\ENDWHILE
\end{algorithmic}
\end{algorithm}
\par}

\setcounter{algorithm}{1}
{\renewcommand\baselinestretch{1.0}\small
\begin{algorithm}[!h]\vspace{-1pt}
\caption{DC Algorithm}
\begin{algorithmic}[1]{\small}
\STATE  \textbf{Initialization:} Two users are allocated same power in a subchannel and the iteration number $t$ begins from zero, that is $\beta _{1,l,k}^0=0.5$.
%\REPEAT
%\STATE  \textbf{DC Algorithm}
\FOR    {each UAV cell }
\FOR    {each subchannel }
\WHILE  {$ |F({\beta ^{(t + 1)}}) - F({\beta ^{(t )}})| > \delta $}
\STATE  ${F^{(t)}}(\beta ) = {F_1}(\beta ) - {F_2}({\beta ^{(t)}}{\rm{)}} - \nabla {{F_2}^T}({\beta ^{(t)}}{\rm{) (}}\beta  - {\beta ^{(t)}}{\rm{)}}$
\STATE  Search ${\beta }$ to minimize the objective function ${F^{(t)}}(\beta )$
\STATE  ${\beta ^{(t)}} \leftarrow {\beta }$
\STATE  $t \leftarrow t+1$
\ENDWHILE
\ENDFOR
\ENDFOR
\end{algorithmic}
\end{algorithm}
\par}

In Algorithm 1, users and subchannels first perform a two-side matching according to the
gain of imperfectly CSI. Then the DC algorithm in Algorithm 2 is used to select the user pair to maximizes the energy efficiency by finding the optimal power proportion factor for each user
pair in the same subchannel.

\subsection{Power Allocation Algorithm}

\setcounter{algorithm}{2}
{\renewcommand\baselinestretch{1.0}\small
\begin{algorithm}[!h]\vspace{-1pt}
\caption{Power Allocation Algorithm}
\begin{algorithmic}[1]{\small}
\STATE  \textbf{Initialization:}  Equal powers are assigned to different subchannels. Number of iterations $t$ starts from zero to maximum $T$.
\REPEAT
\STATE  \textbf{Successive Convex Approximation Algorithm}
\STATE  Obtain $p_{i,k}[{(t+1)}]$ by minimizing the objective function ${Z^t}(P)$ which is the $t$th iteration of (\ref{SCA}).
\STATE  Check the constraints of interference and power.
\STATE  $t=t+1$.
\UNTIL {{$|Z(p_{i,k}^{(t + 1)}) - Z(p_{i,k}^{(t)})| > \varsigma $} or $t = T$}
\end{algorithmic}
\end{algorithm}
\par}

Since the power of different subchannels is equal in the above user scheduling, Algorithm 3 is
proposed as an energy efficient cross-subchannel power allocation algorithm in the UAV network.
The problem is converted into a convex function using the successive convex approximation
method based on the existing subchannel-user matching state. Then an iterative algorithm is
designed to find suboptimal solutions under UAV power constraint and interference constraint
for macro users.

A sequential quadratic programming is used to address convex optimization problem in this paper. First, ${p_{i,k}}[0]$ is given an initial value to solve the problem (\ref{SCA}).
Because macro users are affected by cross-layer interference from UAV. At that time macro BS updates the interference constraint $\sum\limits_{i = 1}^I {{p_{i,k}}|H_{w,i,k}^M{|^2} \le {I_k}}$,  ${{p_{i,k}}|H_{w,i,k}^M{|^2}}$ is obtained from the feedback of macro users. Then, the macro BS send the cross-tier interference to UAV via backhaul. Then the solution is treated as the next round initial value ${p_{i,k}}[t]$ with $t=t+1$. Finally, the algorithm terminates when the function converges.

\subsection{Complexity Analysis}

For user scheduling, exhaustive search method can obtain optimal match for all users and subchannels. Suppose there are $I$ UAV cells, $N$ users in each UAV cell and $K=N/2$. Then the complexity is given by $O({{N!} \over {{2^K}}})$ for exhaustive search. The worst situation of Algorithm 1 using DC algorithm is that any new user need to compare with previous two users for all subchannels. Therefore, the complexity is no more than $O({K^2})$ for Algorithm 1.  Logarithmic form is used to represent computational complexity, $O(\ln K) < O(\ln ((2K)!)) = O(\ln ((2K)!) - K)$. Therefore, our user scheduling algorithm reduced computational complexity.

\section{Numerical Simulation And Analysis}

The results of simulation are given to demonstrate effective user scheduling and power allocation
in NOMA UAV network with imperfect CSI. The whole network consists of one ground BS,
UAVs and users. Some users are still provided communication service by the ground macro BS,
while others are provided service by hovering UAVs nearby. The BS is arranged at the center
point with 1000 m for macrocell radius, and 20 macro users are randomly distributed around the
BS receiving signals from it. The paper simulates the scenario where several UAVs randomly
move over the users in need and hover to provide communication services. 4 UAVs are randomly
distributed with radius of 350 m and height of 200 m, and $N$ users need communication
connection service in each UAV cell. The flying height of low-altitude rotor UAVs is about
hundreds of meters. Each UAV has a minimum distance constraint from the ground BS to ensure that the UAV serves users at the edge of the macrocell. And enough users are randomly deployed within the coverage of each UAV, and the number of UAV users in the simulation ranges from 10 to 40.
Carrier frequency is 2 GHz and divided into $K$ subchannels where $K=N/2$.
The minimum distance between the BS and UAV cells is 50 m on a horizontal plane. Different
from the channel fading model from ground BS to users, the channel path loss model in UAV
network includes LOS and NLOS. The AWGN power spectral density ${N_0}$ = -174 dBm/Hz and
AWGN power ${\sigma ^2} = {B \over K}{N_0}$. The maximum power is 5 W for each UAV and 0.5 W is used as
hovering power consumption ${p_m}$. The error tolerance $\varsigma  = \delta  = 0.01$.

Fig. 2 denotes energy efficiency of the UAV network with $N$ from 10 to 40 per UAV cell
in NOMA and OFDMA condition. Outage probability is 0.05. Fig. 2 proves that network energy efficiency rises with increased number of users who need additional UAV services. Fractional transmit power allocation (FTPA) is a common algorithm for user scheduling where power allocated on users in the same subchannel is according to the exponent of subchannel gain.
The new power allocation scheme with successive convex approximation algorithm has a better
performance compared with the NOMA-DC and NOMA-FTPA scheme in \cite{36Energy}. The NOMA-proposed
in the simulation is our resource allocation method. NOMA-DC is the DC algorithm
used for power allocation, which is different from the NOMA-proposed. The NOMA-FTPA
is the resource allocation algorithm with FTPA and NOMA for comparison. And OFDMA method only allows each subchannel
to serve one user at the same time. It is obvious that the energy efficiency of UAV networks
using NOMA is much higher than that of OFDMA. Energy efficiency of our algorithm has
a 3.6$\%$ improvement than the power allocation algorithm in \cite{36Energy} and 17$\%$ more than FTPA
algorithm with 30 users per UAV cell.

\begin{figure}
        \centering
        \includegraphics*[width=8cm]{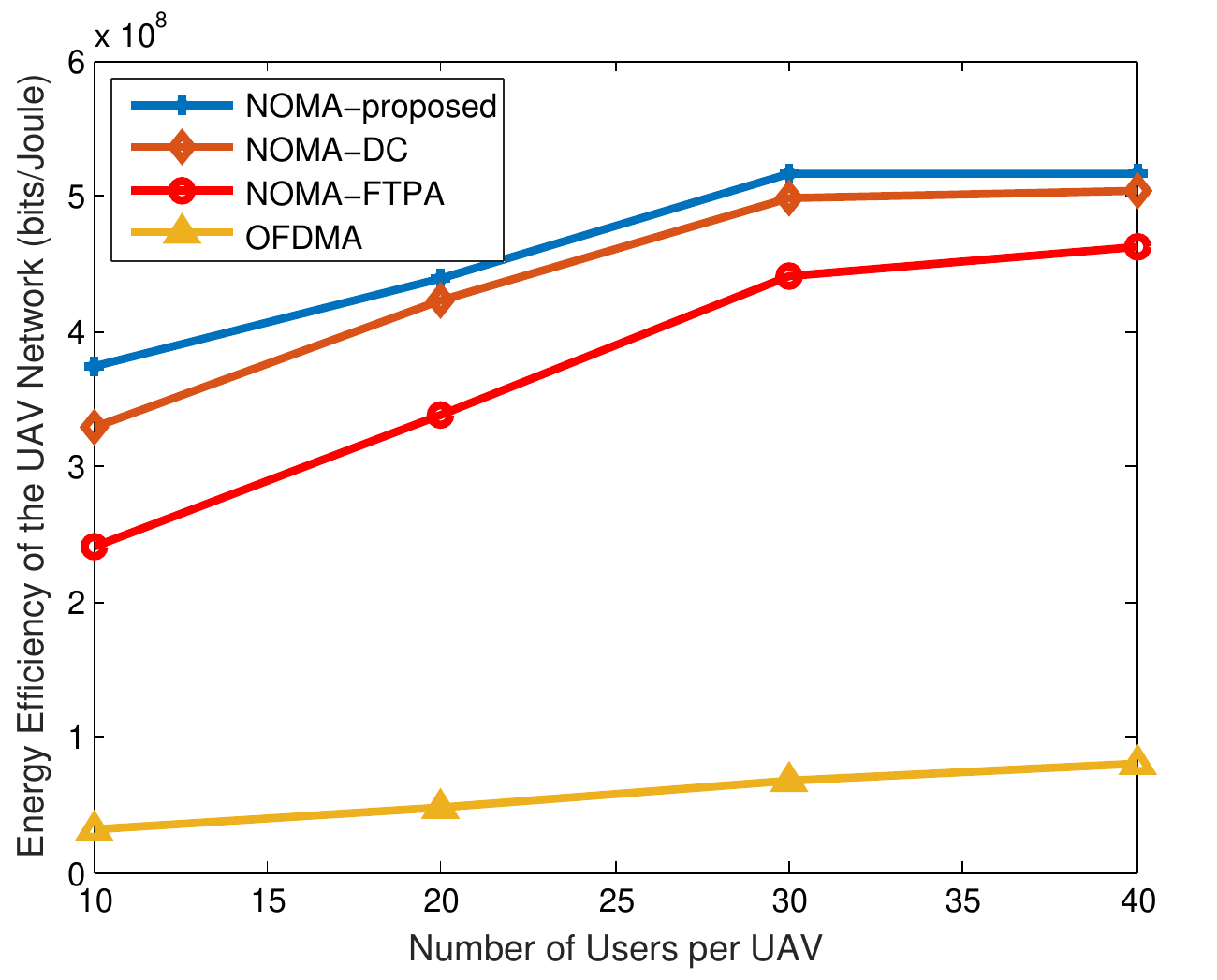}
        \caption{Energy efficiency vs the number of users.}
        \label{fig:2}

\end{figure}

Fig. 3 shows that the energy efficiency varies along with iterations for power allocation
algorithm with 10 and 30 users per UAV cell. Fig. 3 validates the power allocation scheme using successive convex approximation in this paper, we can see that the energy efficiency reaches a stable value after 4 iterations. Therefore, this method can finally obtain a stable suboptimal solution. Besides, system has a higher energy efficiency under 30 users compared to 10 users per UAV cell.

\begin{figure}
        \centering
        \includegraphics*[width=8cm]{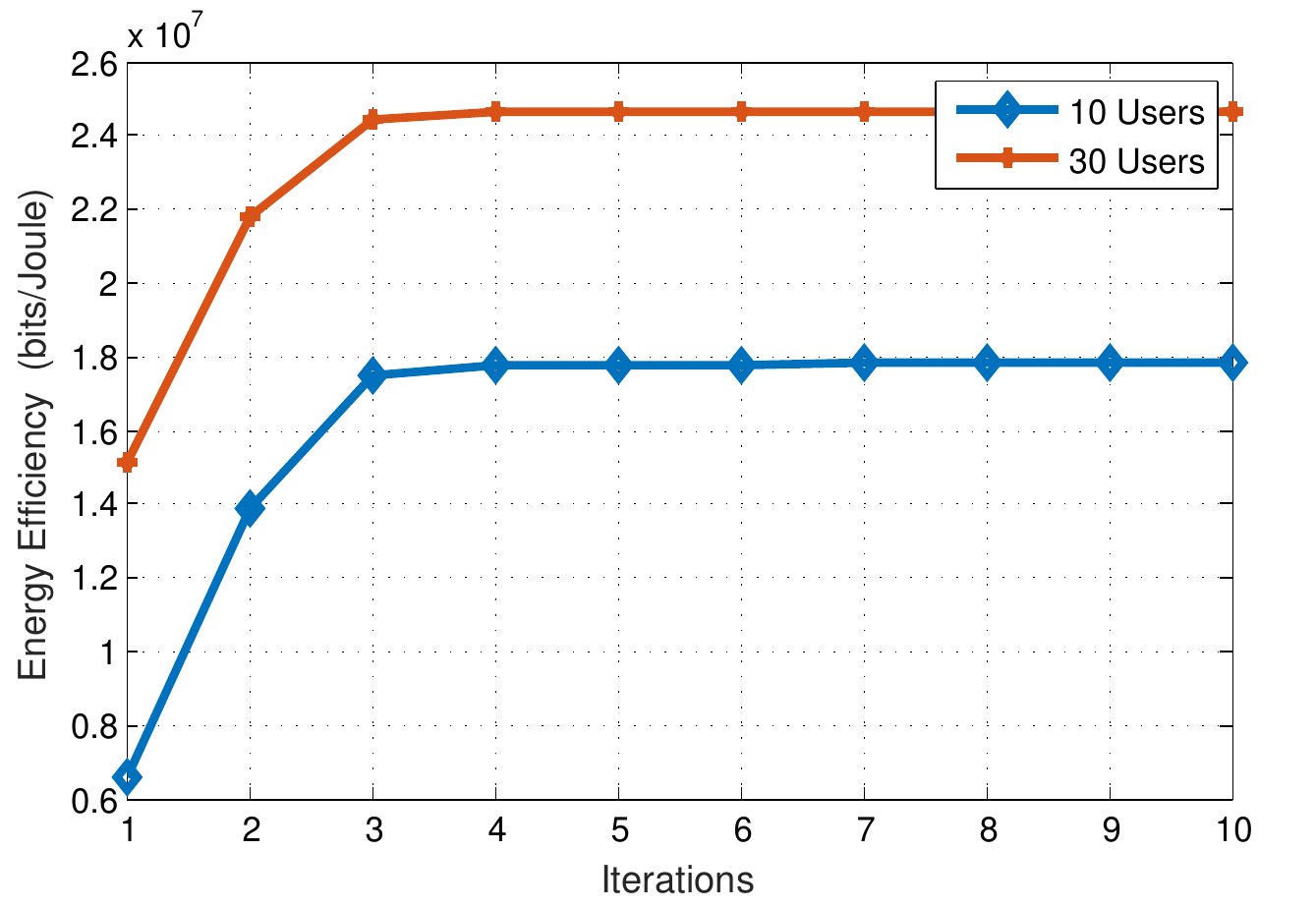}
        \caption{Energy efficiency under iterations.}
        \label{fig:3}

\end{figure}

In Fig. 4, the UAV network performance is evaluated against the number of users and different
estimation errors. NOMA-proposed is the successive convex approximation method used in power
allocation in the paper. And the network energy efficiency declines as the estimation error
increases. Because when the estimation error increases, the ability of the receiving terminal
to sense the UAV¡¯s channel information is poor, the maximum capacity of the channel cannot
be accurately calculated, and the instantaneous rate may exceed the maximum capacity, resulting
in an increased outage probability. In particular, when there are 30 users per UAV cell, the total
UAV network performance with $\sigma _e^2 = 0.01$ improves by 2.9$\%$ over $\sigma _e^2 = 0.05$ as well as by 6$\%$ over $\sigma _e^2 = 0.05$.

\begin{figure}
        \centering
        \includegraphics*[width=8cm]{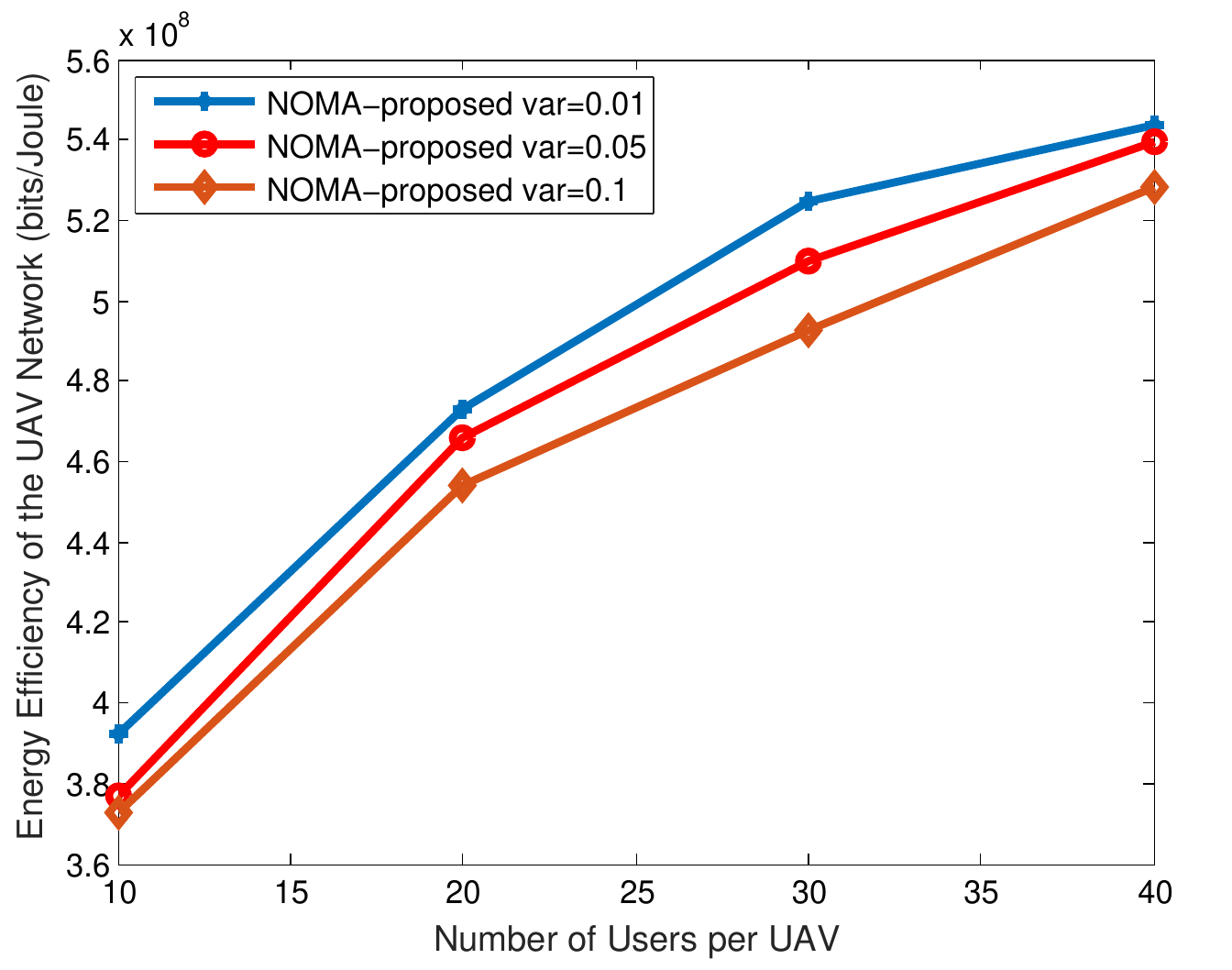}
        \caption{Energy efficiency vs different number of users and estimation errors.}
        \label{fig:4}

\end{figure}

Fig. 5 evaluates energy efficiency of UAV cells against the number of users considering
different CSI. UAVs move to different hovering positions in the air and transmit through LOS
and NLOS communication channels, and eventually lead to different channel errors. The perfect
CSI indicates that there is no channel estimation error, and the user can perceive accurate channel
gain, that is, ${e_{n,i,k}} = 0$ . The energy efficiency in our proposed algorithm is better than FTPA
algorithm regardless perfect CSI or imperfect CSI with $\sigma _e^2 = 0.2$. When there are 40 users per
UAV cell, the total energy efficient with perfect CSI is 6$\%$ more than $\sigma _e^2 = 0.2$ in our proposed
algorithm.

\begin{figure}
        \centering
        \includegraphics*[width=8cm]{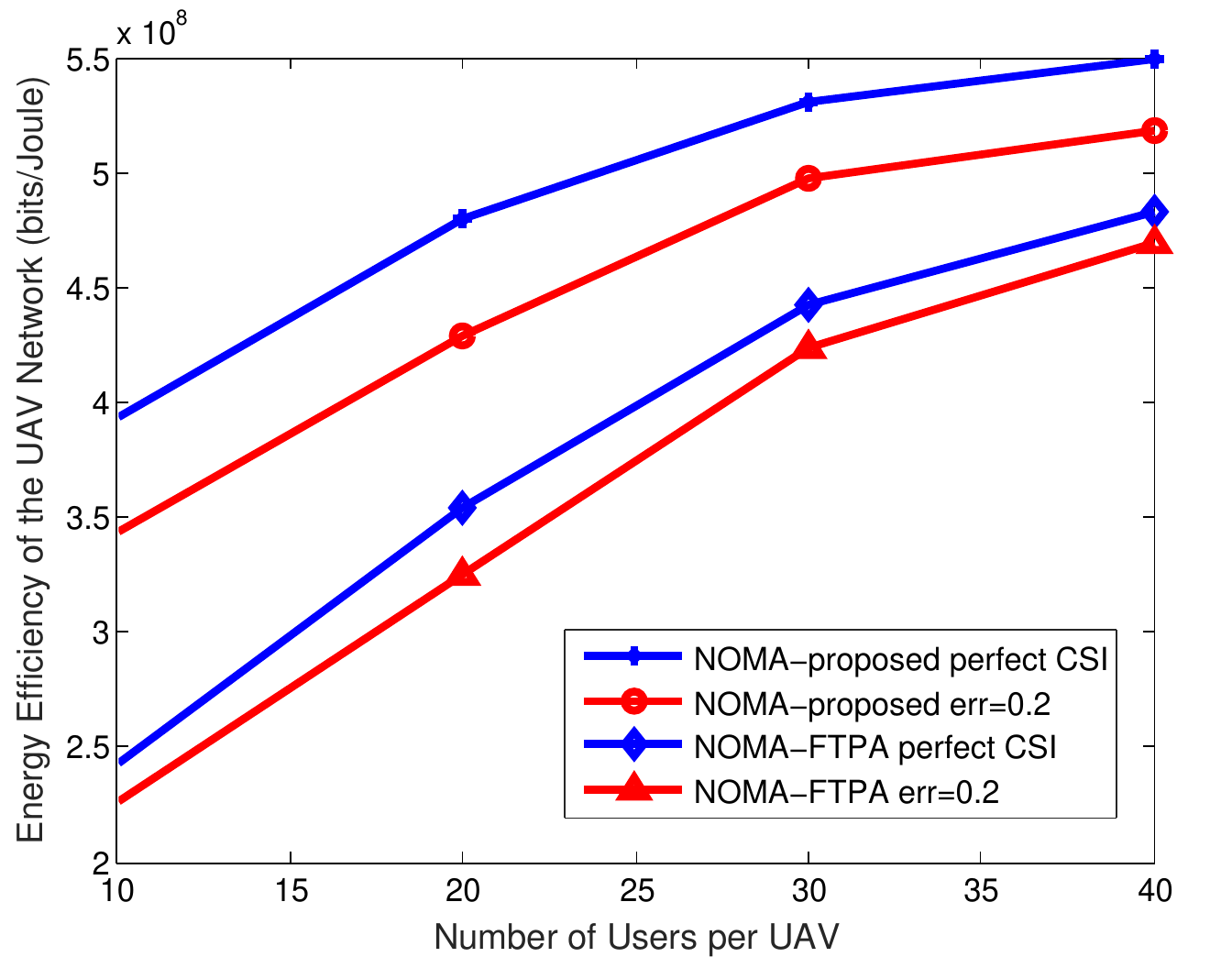}
        \caption{Energy efficiency under perfect and imperfect CSI.}
        \label{fig:5}

\end{figure}

Fig. 6 evaluates the sum of UAV network energy efficiency against hovering power consumption
${p_m}$. 10 users are set in each UAV cell and the maximum transmit power of each UAV is 10 W.
It can be obtained that the UAV network performance is degraded as the ${p_m}$ increases in NOMA
and OFDMA. The increase in hovering power causes an increase in overall power consumption,
and the ratio of the total data rate to it decreases. However, the performance of our proposed
algorithm still outperforms the the existing schemes. The energy efficiency in our algorithm exceed 19$\%$ than FTPA and is far superior to OFDMA when the power consumption of ${p_m}$
is 1 W.

\begin{figure}
        \centering
        \includegraphics*[width=8cm]{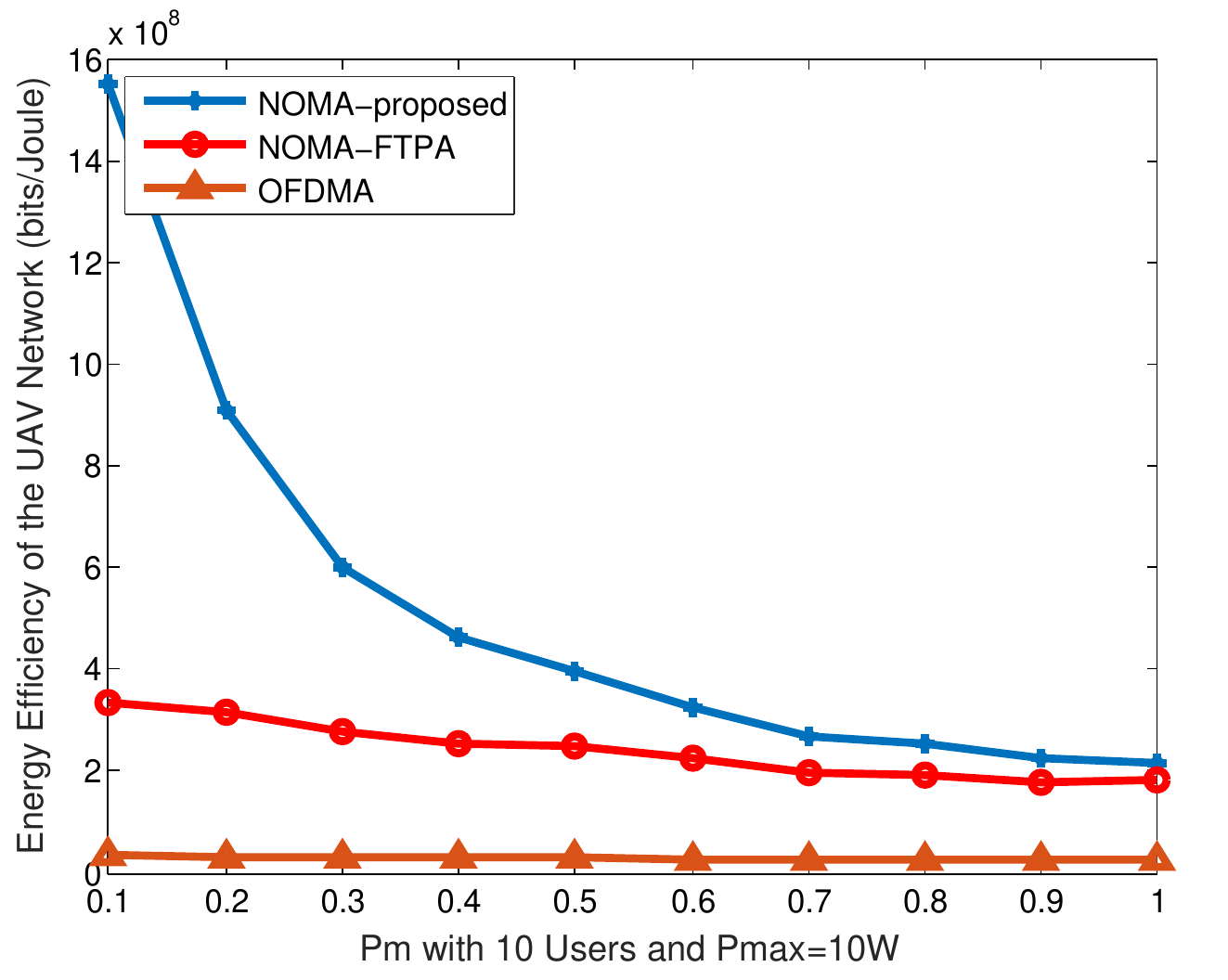}
        \caption{Energy efficiency vs ${p_m}$ with ${P_{\max }}$ = 10 W.}
        \label{fig:6}

\end{figure}

Fig. 7 evaluates the whole energy efficiency of UAV network versus hovering power consumption
${p_m}$ with 5 W for the maximum transmit power per UAV. There are 10 users per UAV cell
and estimation error is 0.05. The energy efficiency degrades as the ${p_m}$ increases. For example,
the performance of our proposed algorithm improves 18$\%$ more than FTPA algorithm in user
scheduling with hovering power consumption 0.5 W.

\begin{figure}
        \centering
        \includegraphics*[width=8cm]{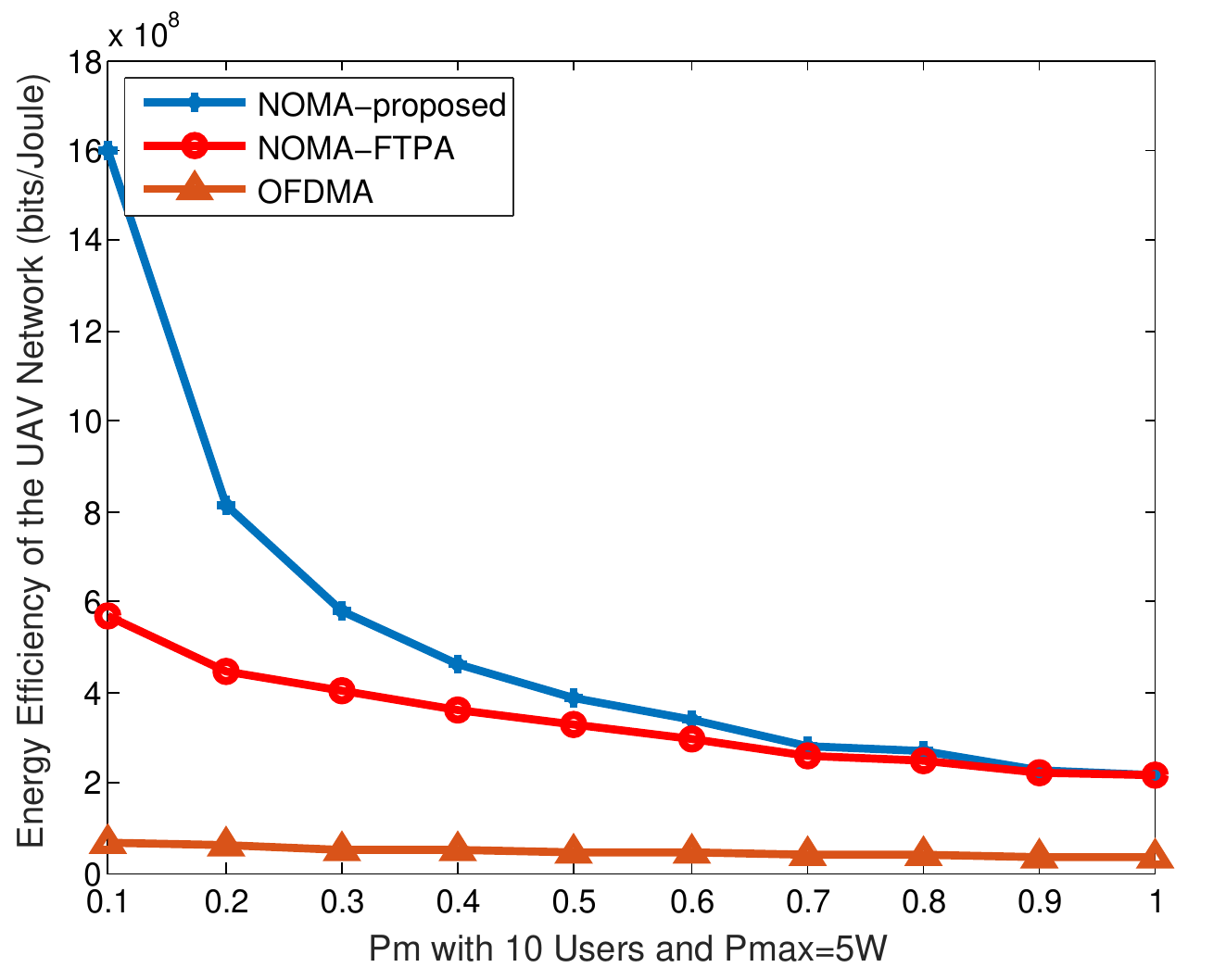}
        \caption{Energy efficiency of the system vs ${p_m}$ with ${P_{\max }}$ = 5 W.}
        \label{fig:7}

\end{figure}

The energy efficiency of UAV network versus hovering power consumption ${p_m}$ with different
estimation errors of low-altitude channel is revealed in Fig. 8. System performance decreases
not only as hovering power consumption increases but also as estimation error increases. At the
same hovering power consumption, the subchannel with a smaller estimation error can achieve
greater energy efficiency. Because the estimation error means the channel perception capability.
For ${p_m}$ = 0.5 W with 10 users per UAV cell, the system performance with $\sigma _e^2 = 0.01$ improves
5$\%$ than $\sigma _e^2 = 0.1$ and 19.7$\%$ than $\sigma _e^2 = 0.5$.

\begin{figure}
        \centering
        \includegraphics*[width=8cm]{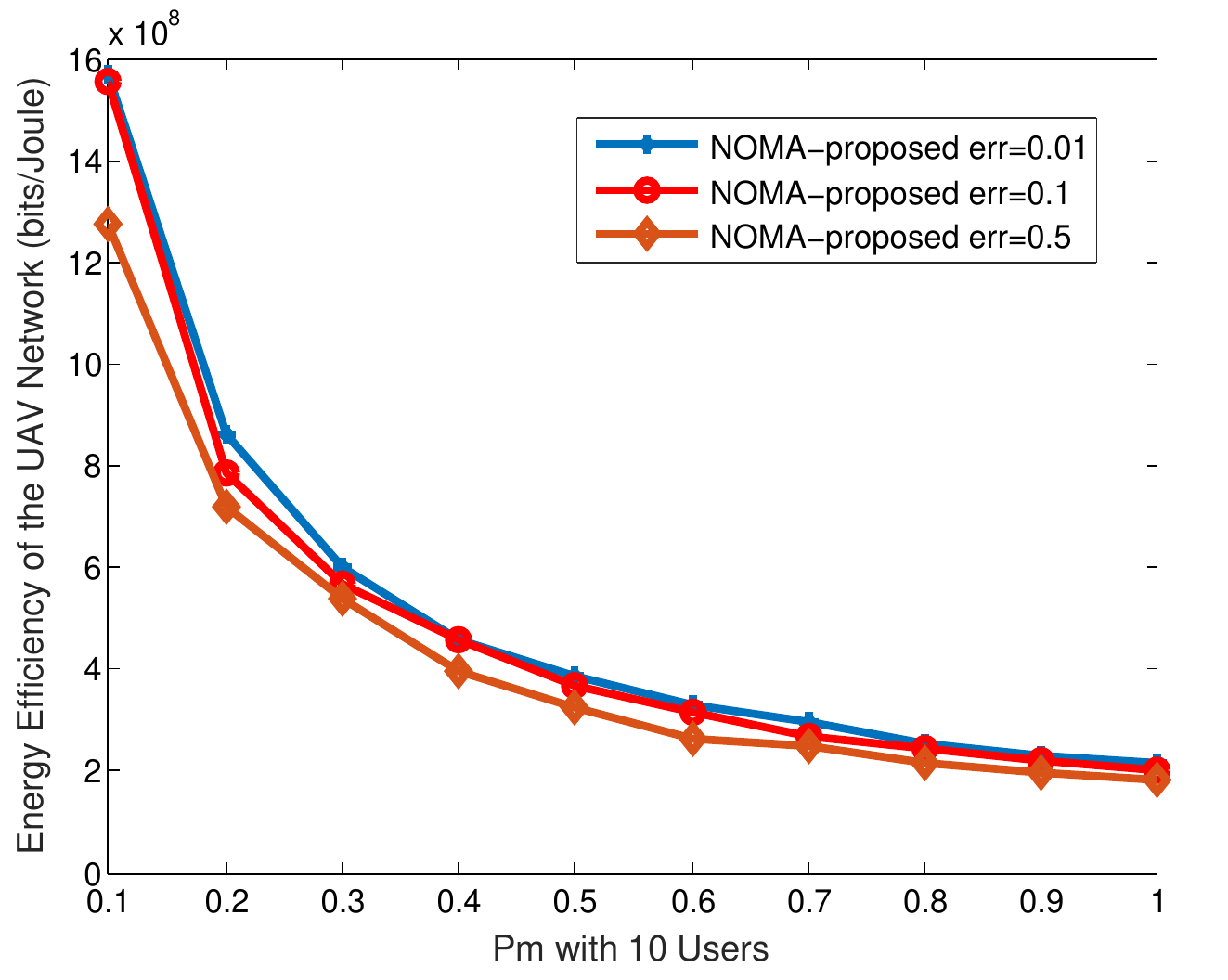}
        \caption{Energy efficiency vs ${p_m}$ under different estimation errors.}
        \label{fig:8}

\end{figure}

The network performance versus the height of UAVs is given in Fig. 9. The channel estimation
error is 0.05 and 20 users per UAV cell. The mobility of UAV in the vertical direction will also
bring changes in network energy efficiency. Regardless of the algorithm, the system energy
efficiency increases and then drops as the height of UAV grows. The distance from UAVs
to users becomes farther with a higher UAV, but the probability of LOS becomes larger too.
Therefore, path loss is reduced due to the larger LOS probability at first. However as the height
continues to increase, the distance influence become greater and the overall path loss increases.
The height value for realizing maximum energy efficiency is related to the environment in which
the UAV and the user are located. When the height of UAV is 150 m, the energy efficiency of
our algorithm raises 8.8$\%$ in comparison with NOMA-FTPA algorithm and increases 34$\%$ than
OFDMA algorithm.

\begin{figure}
        \centering
        \includegraphics*[width=8cm]{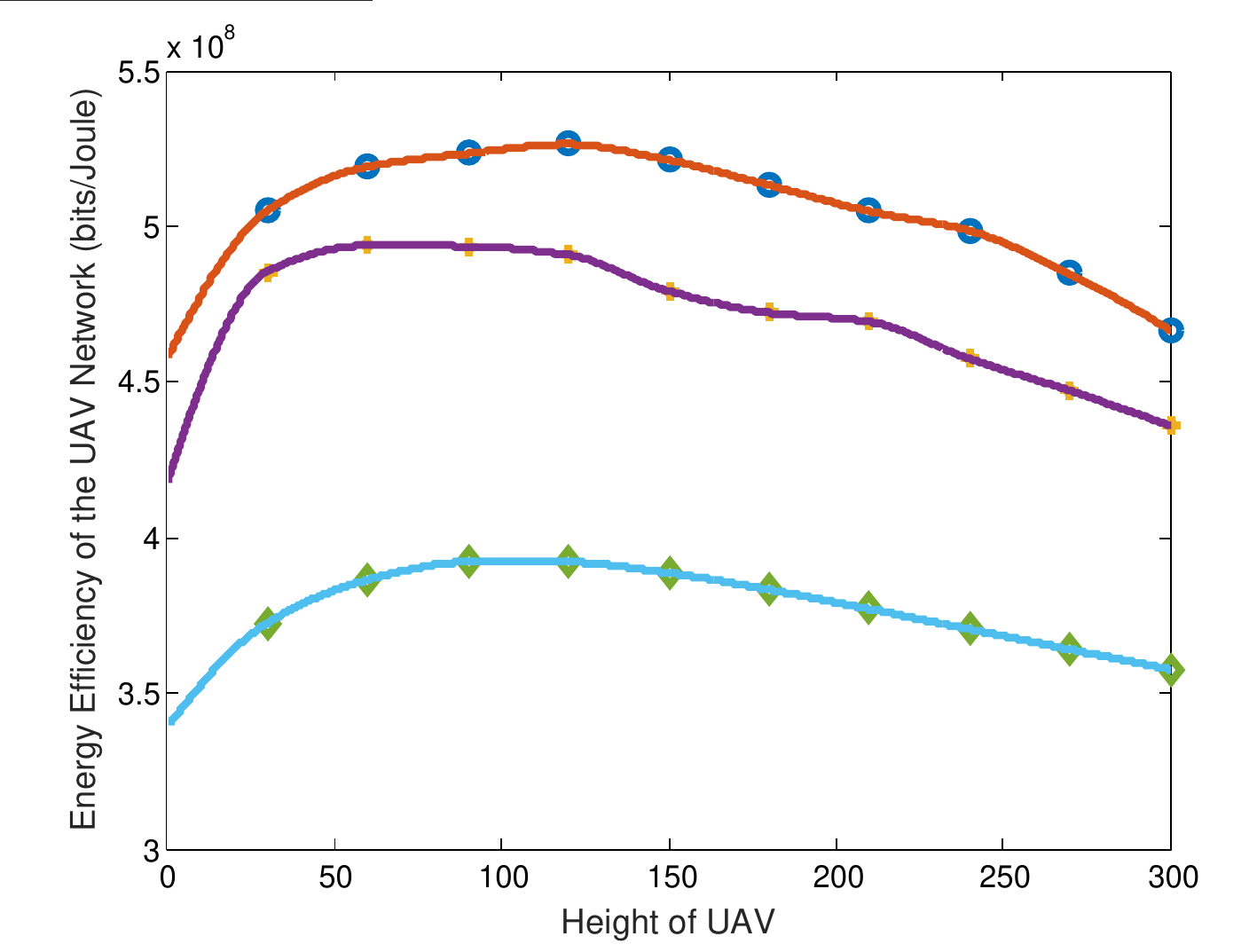}
        \caption{Energy efficiency vs height of UAVs.}
        \label{fig:9}

\end{figure}

\section{Conclusions}

Resource allocation to achieve maximum energy efficiency in the NOMA UAV network was
studied in our paper. Because UAVs can move and provide services in the air, the constraints
of the energy efficiency optimization problem included UAV power, outage probability, and
interference limitation to macro users. Due to the small-scale fading in low-altitude channels,
we considered imperfect CSI for UAV communications. The non-convex target problem with
imperfect CSI was transformed into a problem without probability constraint at first. Resource
allocation was divided into two steps with a fixed hovering height of the UAV in the UAV
network. In user scheduling, a suboptimal algorithm was proposed to achieve subchannels and
users matching by finding power proportion factors for users assigned on the same subchannel.
Then the objective function was non-convex for power yet and was converted into a convex
problem through the successive convex approximation method. Finally, a suboptimal algorithm
was designed to handle the convex problem. The simulation proved that energy efficiency in the UAV network was promoted by our algorithm in comparison with existing algorithms.

\end{document}